\documentclass[11pt]{article}

\usepackage[a4paper, left=1in, right=1in, top=1in, bottom=1in]{geometry}

\usepackage{fontspec}
\usepackage{amsmath,amssymb,amsthm,mathtools}
\usepackage{graphicx}
\usepackage{adjustbox}
\usepackage{tikz}
\usetikzlibrary{arrows.meta,positioning}
\usepackage{booktabs}
\usepackage{array}
\usepackage{mathpartir}
\usepackage{mathrsfs}
\usepackage{stmaryrd}
\usepackage{xcolor}
\usepackage{url}
\usepackage{algorithm}
\usepackage{algpseudocode}
\usepackage{placeins}
\usepackage{minted}
\usepackage{hyperref}
\usepackage{aliascnt}
\usepackage[capitalise,nameinlink,noabbrev]{cleveref}

\usepackage{newunicodechar}

\newfontfamily\coqsymbolfont{JuliaMono-Regular.ttf}[Scale=MatchLowercase]

\newcommand{\setcoqmonofont}{%
  \setmonofont{FreeMono.otf}[
    Scale=MatchLowercase,
    BoldFont=FreeMonoBold.otf,
    ItalicFont=FreeMonoOblique.otf,
    BoldItalicFont=FreeMonoBoldOblique.otf
  ]%
}

\newunicodechar{⦃}{{\coqsymbolfont\char"2983\relax}}
\newunicodechar{⦄}{{\coqsymbolfont\char"2984\relax}}

\hypersetup{hidelinks}
\makeatletter
\providecommand{\theHALG@line}{}
\renewcommand{\theHALG@line}{\thealgorithm.\arabic{ALG@line}}
\makeatother

\setminted{
  fontsize=\scriptsize,
  breaklines=true,
  breakanywhere=true,
  autogobble=true,
  xleftmargin=2em,
  numbersep=0.5em
}

\newcommand{\CameraReady}{}

\theoremstyle{definition}
\theoremstyle{definition}

\newtheorem{definition}{Definition}[section]

\newaliascnt{theorem}{definition}
\newtheorem{theorem}[theorem]{Theorem}
\aliascntresetthe{theorem}

\newaliascnt{lemma}{definition}
\newtheorem{lemma}[lemma]{Lemma}
\aliascntresetthe{lemma}

\newaliascnt{corollary}{definition}

\aliascntresetthe{corollary}

\newcommand{\cA}{\mathcal{A}}
\newcommand{\cB}{\mathcal{B}}

\newcommand{\cP}{\mathcal{P}}
\newcommand{\cQ}{\mathcal{Q}}
\newcommand{\cM}{\mathcal{M}}

\newcommand{\bbR}{\mathbb{R}}
\newcommand{\bbN}{\mathbb{N}}
\newcommand{\bbZ}{\mathbb{Z}}
\newcommand{\parens}[1]{\left(#1\right)}
\newcommand{\norm}[1]{\left\lVert#1\right\rVert}
\newcommand{\dKL}{D_{\mathrm{KL}}}
\newcommand{\KL}[2]{\dKL\!\parens{#1\,\|\,#2}}
\newcommand{\DG}[2]{D_{\bbZ,#1,#2}}
\newcommand{\distr}{\mathsf{Distr}}
\newcommand{\semantic}[1]{\llbracket #1\rrbracket}
\newcommand{\csemantic}[1]{\llbracket #1\rrbracket^{\bot}}
\newcommand{\completed}[1]{\mathsf{Comp}\parens{#1}}
\newcommand{\supp}{\mathsf{supp}}
\newcommand{\TVD}{\Delta}
\newcommand{\Pyth}{\mathsf{Pyth}}
\newcommand{\AEJ}{\mathsf{AE}}
\newcommand{\WinProb}{\mathsf{WinProb}}

\newcommand{\code}{\mathsf{code}}

\newcommand{\doubleplus}{\mathbin{+\!\!+}}

\title{Verified Pythagorean Composition for \\ Adaptive Cryptographic Games:\\
Noise Flooding in Homomorphic Encryption}

\author{
Yi Lee\thanks{Joint Center for Quantum Information and Computer Science, University of Maryland, USA. Email: \url{ylee1228@umd.edu}} 
\and
Alexandru Cojocaru\thanks{School of Informatics, The University of Edinburgh, United Kingdom. Email: \url{a.cojocaru@ed.ac.uk}}
\and
Junyi Liu\thanks{Joint Center for Quantum Information and Computer Science, University of Maryland, USA. Email: \url{junyiliu@umd.edu}}
\and
Xiaodi Wu\thanks{Joint Center for Quantum Information and Computer Science, University of Maryland, USA. Email: \url{xwu@cs.umd.edu}}
}

\date{}

\begin{document}

\maketitle

\begin{abstract}
Noise flooding is a standard defense against decryption attacks on approximate
homomorphic encryption, but its security proof is unusually sensitive to
composition.
Replacing each of \(q\) adaptive decryption answers with a
statistically close simulation and applying an ordinary hybrid argument loses
linearly in \(q\).
The cryptographic proof instead accumulates conditional
Kullback--Leibler (KL) costs and converts to statistical distance once, giving
the parameter-critical square-root loss.

We machine-check this argument using Rocq and SSProve.
Given any fully homomorphic encryption scheme that is approximately correct and IND-CPA secure,
we formalize a reduction for every \(q\)-query IND-CPAD adversary and prove
\[
 \Pr[\mathsf{IND\text{-}CPAD}_{\mathsf{NF}}^{\mathcal A}=1]
 \leq \beta_{\mathsf{CPA}}(\mathcal B_{\mathcal A,q})
       + \frac{\sqrt{qn}}{2\gamma}.
\]
where \(n\) is the plaintext dimension and \(\gamma\) is the flooding-width
multiplier.
Our proof constructs a new relational program logic
over SSProve semantics.
Its Pythagorean judgment composes
conditional KL budgets without converting them to statistical distance, and a
verified trace compiler lifts a local oracle rule to arbitrary adaptive
programs with a single final conversion.
\end{abstract}

\section{Introduction}
\label{sec:introduction}

\begin{figure*}[t]
\centering
\begin{adjustbox}{max width=\linewidth}
\begin{tikzpicture}[
  x=1cm,
  y=1cm,
  >={Latex[length=2mm]},
  contribution/.style={->,semithick,draw=black!65},
  module/.style={
    draw=black!65,
    rounded corners=2pt,
    align=center,
    text width=3.0cm,
    minimum height=1.35cm,
    inner sep=3pt,
    font=\footnotesize
  },
  interface/.style={module,fill=orange!10},
  application/.style={module,fill=green!10},
  reusable/.style={module,text width=3.35cm,fill=blue!8},
  result/.style={module,fill=violet!12,draw=black!80,very thick},
  lane/.style={font=\sffamily\footnotesize\bfseries,anchor=south west}
]

\node[interface] (interfaces) at (1.55,4.35) {%
  \textbf{Assumptions\\on the scheme}\\[0.2ex]
  IND-CPA security,\\
  charts, and\\
  approximate\\
  correctness};

\node[lane] at ([yshift=0.3cm]interfaces.north west)
  {NOISE-FLOODING INSTANTIATION};

\node[application] (games) at (5.15,4.35) {%
  \textbf{Noise-flooded\\construction}\\[0.2ex]
  IND-CPA and\\
  IND-CPAD games\\
  and reduction};

\node[application] (local) at (8.75,4.35) {%
  \textbf{Analysis of one\\decryption call}\\[0.2ex]
  Gaussian KL bound\\
  and reachable-state\\
  invariant};

\node[application] (adaptive) at (12.35,4.35) {%
  \textbf{Adaptive replacement}\\[0.2ex]
  Replace up to \(q\)\\
  adaptively chosen calls};

\node[result] (final) at (15.95,4.35) {%
  \textbf{Checked\\security theorem}\\[0.2ex]
  \(\beta_{\mathsf{CPA}}(\mathcal B)+
    \sqrt{qn}/(2\gamma)\)};

\draw[contribution] (interfaces) -- (games);
\draw[contribution] (games) -- (local);
\draw[contribution] (local) -- (adaptive);
\draw[contribution] (adaptive) -- (final);

\node[reusable] (kl) at (1.85,1.35) {%
  \textbf{Discrete probability}\\[0.2ex]
  KL divergence, Pinsker,\\
  and conditional-\\
  coordinate bounds};

\node[lane] at ([yshift=0.3cm]kl.north west)
  {REUSABLE VERIFIED MACHINERY};

\node[reusable] (gaussian) at (6.15,1.35) {%
  \textbf{Discrete-Gaussian\\analysis}\\[0.2ex]
  Normalization,\\
  moments, and exact KL};

\node[reusable] (logic) at (10.45,1.35) {%
  \textbf{Pythagorean\\relational logic}\\[0.2ex]
  Accumulate KL\\
  costs and convert to\\
  additive error once};

\node[reusable] (compiler) at (14.75,1.35) {%
  \textbf{From one call to\\adaptive programs}\\[0.2ex]
  Selected-call compiler\\
  and semantic correctness};

\draw[contribution] (kl) -- (gaussian);
\draw[contribution]
  (kl.south) -- ++(0,-0.35)
  -| ([xshift=-0.4cm]logic.west) -- (logic.west);

\draw[contribution] (gaussian.north) -- (local.south);
\draw[contribution] (logic.north) -- ([xshift=-0.45cm]adaptive.south);
\draw[contribution] (compiler.north) -- ([xshift=0.45cm]adaptive.south);

\end{tikzpicture}
\end{adjustbox}
\caption{Overview of the checked development.  The upper lane specializes the
reusable results in the lower lane to noise flooding and culminates in the
exported security theorem.  Arrows show how the mathematical and program-level
results combine: orange marks assumptions, green the noise-flooding
instantiation, blue reusable verified machinery, and purple the final theorem.}
\label{fig:theory-map}
\end{figure*}

Ordinary encryption protects data while it is stored or transmitted, but
computation typically requires exposing the plaintext.
Fully homomorphic encryption (FHE) removes this boundary: a client can encrypt
its data and send the ciphertexts to an untrusted server, which computes on them
without the secret key and returns an encryption of the corresponding result.
First envisioned in 1978 and first constructed by Gentry in 2009, FHE was a
long-standing goal of cryptography~\cite{rad78,gentry09}; the LWE-based
constructions that launched the modern line of efficient schemes were
recognized by the 2022 G{\"o}del Prize~\cite{godel2022}.
Modern families target complementary workloads: BGV and BFV provide packed
modular arithmetic, TFHE targets Boolean computation and fast bootstrapping, and
CKKS provides packed approximate numerical
arithmetic~\cite{godel2022,fv12,tfhe,ckks}.
Within approximate FHE, CKKS is the dominant design and is implemented in
nearly every major library offering approximate arithmetic~\cite{heprofiler}.

CKKS is attractive for data-analysis and machine-learning workloads because
many numerical computations can tolerate small errors.
It represents values at finite precision and allows error to grow in a
controlled fashion during encrypted computation, so decryption returns a value
close to, rather than exactly equal to, the intended result.
In some applications, however, the server can obtain decrypted results through
repeated interaction with the client.
An attacker who knows the inputs and the computation can then predict its
mathematical result and compare that prediction with the approximate value
returned after decryption.
Li and Micciancio showed that these small discrepancies can depend on the secret
key strongly enough to recover it~\cite{lm}.

Li and Micciancio captured the missing security guarantee with the
\emph{IND-CPAD} security game.
It grants only restricted decryption access: the attacker may query ciphertexts
produced through legitimate encryption and evaluation, and only when it already
knows the mathematical result.
For a perfectly correct exact scheme, such a decryption reveals nothing beyond
that known result.
CKKS instead reveals an approximate result, including the residual error
exploited by the attack.
Li, Micciancio, Schultz-Wu, and Sorrell (LMSS) subsequently showed that
approximate FHE can be secured in this model by adding fresh, sufficiently wide
Gaussian noise to each released result~\cite{lmss}.
This defense, called \emph{noise flooding}, masks the original
secret-key-dependent error but also consumes precision, so its width must balance
security against the usefulness of the decrypted
result~\cite{costache2023precision,bergamaschi2025revisiting}.

The security--precision tradeoff depends critically on how the loss composes
across repeated decryptions.  Security proofs for adaptive cryptographic
systems often reduce a global attack to a local distributional replacement:
one oracle answer in the real game is replaced by a nearby simulated answer.
For noise flooding, the real answer adds fresh noise around the scheme's
approximate decryption, whereas the simulated answer adds noise with the same
width around the known exact result.  Approximate correctness bounds the
displacement between these centers.  For every reachable history in which a
permitted query is answered, an equal-variance discrete-Gaussian calculation
then bounds the conditional KL divergence between the two answer distributions
by
\[
  \varepsilon_{\mathrm{nf}} = \frac{n}{2\gamma^2},
\]
where \(n\) is the noise-coordinate dimension and \(\gamma\) controls the
flooding width~\cite{lmss}.

An attacker may make up to \(q\) queries, choosing each one after seeing the
previous answers, so the analysis must compare entire adaptive transcripts.
Applying Pinsker's inequality separately to each replacement and then adding
the resulting distances with the triangle inequality gives the loss
\[
  q\sqrt{\frac{\varepsilon_{\mathrm{nf}}}{2}}
  = \frac{q\sqrt{n}}{2\gamma}.
\]
For the sharper analysis, let \(\mathcal P\) and \(\mathcal Q\) be the
distributions of the complete real and simulated transcripts.  Because the KL
bound holds conditionally after every reachable history, the KL chain rule
adds the costs of the adaptively chosen answers.  Pinsker's inequality is then
applied once, to the complete transcripts, giving~\cite{lmss,mw17}
\begin{align*}
  \KL{\mathcal P}{\mathcal Q}
    &\leq q\varepsilon_{\mathrm{nf}},\\
  \TVD(\mathcal P,\mathcal Q)
    &\leq \sqrt{\frac{q\varepsilon_{\mathrm{nf}}}{2}}
     = \frac{\sqrt{qn}}{2\gamma}.
\end{align*}
Data processing transfers this transcript bound to the games' output
distributions and hence to their winning probabilities.
Thus, for a fixed security bound, the first analysis requires flooding width
growing linearly in \(q\), whereas the second requires growth only as
\(\sqrt q\).  Because the flooding noise is added directly to the decrypted
result, the sharper composition bound directly preserves numerical precision.

This chain-rule composition principle predates approximate FHE.
Micciancio and Walter studied cryptographic algorithms whose analyses assume
access to an ideal sampler, even though a concrete implementation may only
realize an approximation---for example, because it computes sampling
probabilities using finite-precision floating-point arithmetic~\cite{mw17}.
They called the resulting root-sum-of-squares behavior \emph{Pythagorean
probability preservation}.
Thus, Pythagorean preservation is not specific to FHE: it is the general
composition step connecting a one-call approximation guarantee to an adaptive
computation.

Although this composition step is concise on paper, it combines probability
calculations, adaptive interactions with an attacker, and transformations
between security games.
Small omissions can be difficult to detect on paper, yet may change either the
security claim or its concrete parameters.
We therefore machine-check the argument.

To machine-check a proof, we translate its prose and equations into a precise
language understood by a specialized software system called a \emph{proof
assistant}.
Our development uses the Rocq proof assistant together with SSProve, a Rocq
library for cryptographic games.
The proof remains human-written, but Rocq checks that each step follows from the
stated assumptions and earlier results.
When we say below that we verify a theorem, we mean that we construct such a
machine-checked proof.

One influential approach to machine-checking game-based security proofs
represents each game as a probabilistic program.
The challenger and attacker sample randomness, maintain state, and call
procedures, while a proof step relates one such program to another.
CertiCrypt established this code-based approach inside Coq; EasyCrypt
subsequently pursued a dedicated language and greater
automation~\cite{certicrypt,easycrypt}.
SSProve takes a modular approach inside Rocq: it represents games as packages
that can be linked and replaced, and provides tools for comparing the procedures
inside those packages~\cite{ssprove}.
Its extensibility is central to our work.
We can add a new way to compare two probabilistic programs and use Rocq to prove
that it gives only valid conclusions, while reusing SSProve's existing
representation of cryptographic games.
This architecture fits the two levels of the LMSS proof: a reduction between
complete security games and a probabilistic comparison between two
implementations of a single decryption query.

Our mechanization follows the two levels just described.
At the cryptographic level, it checks the LMSS noise-flooding reduction under
the assumptions stated below, reducing the IND-CPAD winning probability to an
IND-CPA bound plus an explicit square-root loss.
At the program level, it proves a reusable local-to-adaptive replacement
theorem: given a conditional KL bound and a preserved invariant for one oracle
call, the theorem compares complete programs containing up to \(q\) adaptively
chosen calls.
A verified selected-call compiler makes arbitrary SSProve adversary code
compatible with this theorem by exposing the selected calls while preserving
the surrounding adaptive computation.  Combining the compiler with our
Pythagorean judgment gives the quantitative replacement theorem.
Noise flooding is the main checked application, but the replacement theorem
itself does not mention FHE.

\subsection{Machine-checked main theorem}
\label{sec:scope}

Let \(S\) be an approximate-FHE scheme whose message space has local charts in
\(\bbZ^n\), let \(\mathsf{NF}_{\gamma}[S]\) be its noise-flooded transform, and
let \(\cA\) make at most \(q\) non-failing decryption queries in the IND-CPAD
game.
Under the interfaces and assumptions specified in
\Cref{sec:construction}, the formal development constructs an IND-CPA
adversary \(\cB_{\cA,q}\) and proves
\begin{equation}
\label{eq:headline}
 \WinProb[\mathsf{IND\text{-}CPAD}_{\mathsf{NF}_{\gamma}[S]}(\cA)]
 \leq
 \beta_{\mathsf{CPA}}(\cB_{\cA,q})+
 \frac{\sqrt{qn}}{2\gamma}.
\end{equation}
The expression \(\beta_{\mathsf{CPA}}\) is the supplied winning-probability
bound for the underlying scheme.
The formal theorem is given in the artifact as
\texttt{NoiseFloodingSecure.is\_secure}.

The result covers adaptive oracle programs rather than adversaries prewritten as \(q\) separate query phases.
Its proof isolates one lossy transition: real flooded decryption is replaced by flooding around the plaintext recorded in the challenge table.
At the cryptographic level this is the single \(G_0\to G_1\) hop of LMSS.
The formal proof exposes this transition by opening and relinking packages,
compiling calls, replacing unreachable residual decryptions, and identifying
the reduction game; each of these representation changes is proved exact.

The checked theorem deliberately isolates the good-execution step of the LMSS
argument.
It assumes deterministic decryption, approximate correctness with certainty
(that is, no bad key pairs or encryptions exist), and a message space with local
\(\bbZ^n\) charts.
A concrete CKKS instantiation would additionally require a standard up-to-bad game hop
for correctness failures and proofs that CKKS satisfies IND-CPA and the stated
interface assumptions.
We have not yet formalized those obligations, and therefore do not claim a verified
CKKS deployment or concrete parameter set.

As in SSProve generally, probabilistic polynomial time is not formalized in the
Rocq semantics and remains a conventional external
audit~\cite{ssprove}.
Here the standard reduction runs \(\cA\) once and adds oracle forwarding, table
bookkeeping, and Gaussian sampling; no unusual complexity issue is hidden in
the reduction.

\subsection{Why a program logic?}

The noise-flooding proof starts with a KL-divergence bound between two
implementations of one decryption call and must end with an additive bound
between two complete games containing an adaptive adversary.
SSProve's original architecture addresses this kind of local-to-global gap by
connecting two proof styles: high-level equational reasoning about
cryptographic packages and low-level relational reasoning about their
procedures~\cite{ssprove}.
Noise flooding needs a new quantitative connection of the same kind.

Concretely, an SSProve procedure is represented inside Rocq as an abstract
syntax tree (AST).
Its nodes record instructions to sample randomness, update state, and call
procedures, but the tree is not itself a probability distribution.
The probability lemmas used in the security proof instead apply to the
distribution over outputs and states produced when the procedure runs.
Thus a proof must connect the syntax of the game to the distributions about
which its probabilistic claims are made.

The conventional solution in program verification is a \emph{program logic}: a
collection of reusable proof rules, in a tradition dating back to Hoare
logic~\cite{hoare69}.
The logic records properties of program fragments as \emph{judgments}, and its
rules show how these properties behave when fragments sample randomness, are
run in sequence, or call procedures.
Proving that these rules give correct conclusions about the program's output distributions
establishes the logic's \emph{soundness}.
A relational program logic reasons about two programs at a time, making it
suited to transformations between cryptographic games.
Probabilistic relational program logics already underlie both EasyCrypt and
SSProve~\cite{easycrypt,ssprove}.

An ordinary approximate-equivalence judgment does not preserve the required parameter dependence.
Its sequence rule adds scalar errors, so applying Pinsker at every oracle call turns a one-call cost \(\sqrt{\epsilon/2}\) into \(q\sqrt{\epsilon/2}\).
Instead, we define a \emph{Pythagorean judgment}, which records a tuple of conditional KL costs.
Sequencing concatenates tuples, and a checked Micciancio--Walter rule converts the accumulated tuple to additive error only once.
After \(q\) calls of KL cost \(\epsilon\), this gives \(\sqrt{q\epsilon/2}\).

The remaining obstacle is adaptivity: a concrete SSProve adversary is one stateful oracle program, not a syntactic list of \(q\) calls.
We therefore verify a compiler that exposes selected oracle calls while
preserving the surrounding control flow.  Its same-package correctness theorem
shows that the transformation preserves program behavior when selected and
residual calls use the same implementation (\Cref{thm:compile-same}).
Together, the judgment and compiler form a reusable interface: prove one
invariant-preserving oracle rule, then obtain a whole-program additive bound
for arbitrary well-typed adaptive code.
Noise flooding is the first substantial application of this interface;
\Cref{thm:compile-replace} states the general quantitative theorem.

The program logic is broader than the particular LMSS noise-flooding reduction
formalized here.
One direction for reuse returns to the original motivation of Micciancio and
Walter: comparing an ideal sampler with a finite-precision implementation inside
an adaptive cryptographic computation.
A second remains within FHE and noise flooding, but moves to threshold
protocols, where the secret key is shared and parties release partial
decryptions across repeated, possibly adaptive interactions.
Neither application is mechanized here; each would require a one-call
conditional KL bound and invariant suited to its setting.

\subsection{Contributions}

We separate the cryptographic result from the machinery used to establish it.

\textbf{A verified noise-flooding reduction.} We define the IND-CPA and IND-CPAD games, the noise-flooded construction, and a concrete reduction as SSProve packages.
Rocq checks the single statistical game hop, its exact representation-level
refinements, and the loss in \Cref{eq:headline}.
The construction, game reduction, and adaptive statistical step are presented
in \Cref{sec:security-model,sec:verified-reduction,sec:closing-hop}; full game
definitions and reduction pseudocode appear in \Cref{app:security-games}.

\textbf{A reusable Pythagorean program logic.}
We define \emph{additive-error} and \emph{Pythagorean} judgments, and prove
several inference rules in Rocq.
The logic combines unary invariant reasoning,
direct coupling and additive-error rules, KL sampling rules, tuple-concatenating
sequencing, and a final bridge to statistical distance.
We discuss our program logic in \Cref{sec:pyth-toolkit}.

\textbf{A verified selected-call compiler and local-to-adaptive theorem.}
We construct a trace-based compiler that exposes the first \(q\) calls to a
chosen operation inside an arbitrary concrete SSProve program.  We prove the
transformation semantically exact when selected and residual calls use the
same package.  Combining the compiler with one
invariant-preserving Pythagorean oracle judgment yields a whole-program
additive-error judgment with loss \(\sqrt{q\norm{s}_1/2}\).  The compiler,
exactness theorem, and quantitative replacement theorem are independent of
FHE and are presented in \Cref{sec:trace-compiler}.

\textbf{Probability analysis.} We verify KL divergence and its finiteness obligations, Pinsker's inequality, a conditional-coordinate preservation theorem, and the equal-variance KL formula for integer discrete Gaussians.
These results provide the underlying mathematical foundations used in the logic.
Their roles in the one-call Gaussian bound and the final Pythagorean conversion
appear in \Cref{sec:simulation-target,sec:pyth-toolkit}, respectively.

Across these contributions, the Rocq formalization is organized so that games,
assumptions, loss, compiler, and final theorem can be reviewed separately.
The overview in \Cref{fig:theory-map} makes the boundary between the reusable
probability and program-logic results and their noise-flooding application
explicit.  We report the trusted base, distinguish inherited library
assumptions from local proof obligations, and provide a verified replacement
for the one unverified real-analysis lemma inherited through the current
SSProve stack; \Cref{sec:artifact} gives this mechanization audit.

\ifdefined\CameraReady
Our Rocq formalization is available at
\url{https://github.com/ethanlee515/Mending}.
\fi

\section{Construction, Security Setting, and Result}
\label{sec:security-model}

This section defines the construction and states the result proved later,
explaining formalization-specific conventions as they arise.
All distributions are discrete.
Games return a Boolean, and
\(\WinProb[G]=\Pr[G=\mathsf{true}]\).
We write
\(\TVD(P,Q)=\tfrac12\sum_x|P(x)-Q(x)|\) for statistical distance and
\(\KL{P}{Q}\) for KL divergence when the defining series is summable and
\(P\) is absolutely continuous with respect to \(Q\).

\subsection{Approximate FHE and its flooded transform}
\label{sec:construction}

CKKS introduced homomorphic arithmetic with approximate plaintext
semantics~\cite{ckks}.
LMSS subsequently gave a generic approximate-correctness formulation in which
ciphertexts carry public error estimates~\cite{lmss}.
Our formal scheme interface is a typed, executable presentation of that
setting:
\[
 S=(\mathsf{KeyGen},\mathsf{Enc},\mathsf{Eval}_1,
       \mathsf{Eval}_2,\mathsf{Dec}).
\]
These algorithms are distribution-valued, and key generation is lossless: it
returns with probability one.
For bookkeeping, the formal model represents a ciphertext as either
\(\mathsf{None}\), for an unsupported evaluation, or
\(c=\mathsf{Some}(\bar c,e)\), pairing a ciphertext with its public error bound
\(e\in\bbN\).
This option wrapper belongs to the formal API, not to the cryptographic
construction.
Unary and binary evaluation gates suffice to describe arbitrary circuits while
keeping the package interface small.

Although the interface permits randomized decryption, the theorem specializes
it to a deterministic value \(\mathsf{dec}_0(sk,c)\) and assumes
probability-one approximate correctness:
for every encryption or evaluation output
\(c=\mathsf{Some}(\bar c,e)\) having nonzero probability and representing
\(m\), we have
\begin{equation}
\label{eq:approx-correctness}
 d(\mathsf{dec}_0(sk,c),m)\leq e.
\end{equation}
The artifact calls this \emph{support-level} correctness; in cryptographic
terms, the bound holds with probability one.
Nonzero correctness error would require the additional up-to-bad hop discussed
in \Cref{sec:scope}.

LMSS's noise-flooding defense postprocesses approximate decryption with fresh
Gaussian noise~\cite{lmss}.
On a concrete plaintext ring, adding such noise to a plaintext is an ordinary
algebraic operation.
For an abstract message type \(M\), however, a product discrete Gaussian is
supported on \(\bbZ^n\), not on \(M\); it cannot be used as message-valued
noise without first supplying integer coordinates.

Our formalization therefore packages the metric and the coordinates needed
for noise flooding as one interface.
The message space carries a distance \(d:M\times M\to\bbN\), and for each
center \(a\in M\) it has maps
\[
 I_a:M\rightarrow\bbZ^n,
 \qquad J_a:\bbZ^n\rightarrow M
\]
satisfying
\begin{equation}
\label{eq:charts}
 I_a(a)=0,
 \quad d(a,b)=\norm{I_a(b)}_\infty,
 \quad J_b(v)=J_a(v+I_a(b)).
\end{equation}
Here \(I_a(b)\) is the integer displacement from \(a\) to \(b\), while
\(J_a(v)\) interprets the integer displacement \(v\) as a message based at
\(a\).
Thus \(J_a\) transports a distribution on \(\bbZ^n\) to a distribution on
messages.
The three laws say that the coordinates are centered, that their
\(\ell_\infty\) norm agrees with message distance, and that changing the base
point translates the integer coordinates.
In particular, they let the proof compare flooding around two messages as
centered and shifted Gaussians in one common copy of \(\bbZ^n\).

For \(\gamma>0\), noise flooding around \(x\in M\) is consequently the
following single construction:
\begin{equation}
\label{eq:flood}
\begin{aligned}
 \mathsf{Flood}_e(x)&:\\
 &\sigma_e=\max\{1,e\}\gamma,\\
 &\eta\leftarrow D_{\bbZ^n,0,\sigma_e^2},\\
 &\mathsf{return}\;J_x(\eta).
\end{aligned}
\end{equation}
Here \(D_{\bbZ^n,0,\sigma_e^2}\) is the product of \(n\) independent
integer discrete Gaussians.
The sampling step lives in \(\bbZ^n\), and \(J_x\) makes its result a message;
an arbitrary metric space without such maps would not support
\Cref{eq:flood}.
This proof-specific interface is easier to audit than committing the generic
theorem to a concrete CKKS polynomial quotient ring.
The noise-flooded transform \(\mathsf{NF}_{\gamma}[S]\) keeps key generation,
encryption, and evaluation unchanged and decrypts a valid tagged ciphertext by
\[
 a\leftarrow\mathsf{Dec}(sk,c);
 \quad \mathsf{Flood}_e(a).
\]
On input \(\mathsf{None}\), decryption fails; SSProve represents this by a
subdistribution of total weight zero.

We use the standard multi-query left-or-right IND-CPA experiment and the
IND-CPAD experiment of Li and Micciancio~\cite{lm}; their full definitions
appear in \Cref{alg:ind-cpa-game,alg:ind-cpad-game,alg:ind-cpad-oracles}.
For the discussion below, the essential difference is that IND-CPAD adds
evaluation and restricted decryption: a decryption query is answered only
when the two plaintexts recorded for that ciphertext agree.
We write the assumed IND-CPA bound as
\[
 \WinProb[\mathsf{IND\text{-}CPA}_{S}(\cB)]
 \leq \beta_{\mathsf{CPA}}(\cB).
\]
Adversaries are well-typed SSProve oracle programs.  Running time is not part
of the semantics, so preservation of probabilistic polynomial time by the
explicit reduction remains a separate inspection~\cite{ssprove}.

\subsection{The one-query decryption replacement}
\label{sec:simulation-target}

The reduction compares two ways to answer a decryption query when the two
recorded plaintexts agree.
Fix a row that can occur with nonzero probability,
\[
 T_i=(m,m,c),
 \qquad c=\mathsf{Some}(\bar c,e).
\]
Under deterministic decryption, the real and simulated answers are
\begin{align}
\label{eq:real-sim-decrypt}
 \mathsf{ODec}_{\mathsf{real}}(i):\quad&
   y\leftarrow\mathsf{Flood}_e(\mathsf{dec}_0(sk,c));
   \ \mathsf{return}\ \mathsf{Some}(y),\notag\\
 \mathsf{ODec}_{\mathsf{sim}}(i):\quad&
   y\leftarrow\mathsf{Flood}_e(m);
   \ \mathsf{return}\ \mathsf{Some}(y).
\end{align}
Both versions perform the same bounds checks and counter updates, refuse the
same queries, and abort in the same circumstances.
The simulated answer needs neither \(sk\) nor \(b\): the common plaintext
\(m\) is already recorded in the table.

The probability-one correctness assumption in \Cref{eq:approx-correctness}
gives
\(d(\mathsf{dec}_0(sk,c),m)\leq e\).
By the change-of-basepoint law in \Cref{eq:charts}, the simulated answer can
be expressed in the real answer's chart by shifting the integer Gaussian by
\(I_{\mathsf{dec}_0(sk,c)}(m)\).
The distance law bounds the \(\ell_\infty\) norm of this shift by \(e\), so
each of its \(n\) coordinates has magnitude at most \(e\).
The equal-variance discrete-Gaussian identity
\begin{equation}
\label{eq:dg-kl}
 \KL{\DG{\mu}{\sigma^2}}{\DG{\nu}{\sigma^2}}
   =\frac{(\nu-\mu)^2}{2\sigma^2}.
\end{equation}
After fixing the adversary's prior view, \(m\), \(c\), and \(e\) are fixed.
The identity bounds the KL divergence \(\epsilon_i\) between the two answer
distributions for this call by
\begin{equation}
\label{eq:epsilon-nf}
 \epsilon_i
 \leq
 \frac{ne^2}{2\max\{1,e\}^2\gamma^2}
 \leq
 \epsilon_{\mathsf{nf}},
 \qquad
\epsilon_{\mathsf{nf}}=\frac{n}{2\gamma^2}.
\end{equation}

This is a one-call statement conditioned on an arbitrary prior interaction
history.
The central problem is to preserve it across the adversary's \(q\) adaptive
queries without converting to statistical distance after each call.

The underlying noise-flooding reduction is due to LMSS~\cite{lmss}.
Our contribution is to express its assumptions through the interfaces in
\Cref{sec:construction} and check the resulting adaptive reduction in Rocq.

\begin{theorem}[Checked noise-flooding reduction]
\label{thm:main}
Let \(S\), its noise-coordinate interface, probability-one correctness proof,
IND-CPA security interface, and \(\gamma>0\) satisfy
the requirements of \Cref{sec:construction}.
For every IND-CPAD adversary \(\cA\) represented as a well-typed SSProve oracle
program, and every \(q\in\bbN\), the formalization constructs an IND-CPA adversary
\(\cB_{\cA,q}\) with the required oracle interface and proves
\begin{align*}
 &\WinProb[
   \mathsf{IND\text{-}CPAD}_{\mathsf{NF}_{\gamma}[S],q}(\cA)]\\
 &\quad\leq \beta_{\mathsf{CPA}}(\cB_{\cA,q})
       +\sqrt{q\epsilon_{\mathsf{nf}}/2}\\
 &\quad\leq \beta_{\mathsf{CPA}}(\cB_{\cA,q})
       +\frac{\sqrt{qn}}{2\gamma}.
\end{align*}
\end{theorem}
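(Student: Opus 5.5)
The proof is a two-hop game argument. Game \(G_0\) is \(\mathsf{IND\text{-}CPAD}_{\mathsf{NF}_{\gamma}[S],q}(\cA)\). Game \(G_1\) is identical, except that every answered decryption query uses \(\mathsf{ODec}_{\mathsf{sim}}\) from \Cref{eq:real-sim-decrypt} in place of \(\mathsf{ODec}_{\mathsf{real}}\). The bound then splits as
\[
 \WinProb[G_0]\leq \WinProb[G_1]+\TVD(G_0,G_1),
\]
and I bound the two terms separately.

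\textbf{Exact hop to IND-CPA.} Since \(\mathsf{ODec}_{\mathsf{sim}}\) uses neither \(sk\) nor \(b\), I define \(\cB_{\cA,q}\) as follows. It runs \(\cA\) and forwards encryption queries to the IND-CPA left-or-right oracle. It answers evaluation queries by running \(\mathsf{Eval}_1,\mathsf{Eval}_2\) on the public ciphertexts, maintaining the table of plaintext pairs itself. It answers decryption queries by sampling \(\mathsf{Flood}_e(m)\) from the recorded row, and it stops answering once \(q\) non-failing decryptions have occurred. I then show \(G_1\) equals \(\mathsf{IND\text{-}CPA}_S(\cB_{\cA,q})\) as packages. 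This is done by inlining the challenger into the reduction and checking that the tables, counters and refusal conditions coincide. The step is purely equational, by SSProve package linking and swapping, and gives \(\WinProb[G_1]\leq\beta_{\mathsf{CPA}}(\cB_{\cA,q})\).

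\textbf{Statistical hop.} To bound \(\TVD(G_0,G_1)\) I first state a local Pythagorean judgment for one decryption call. The invariant says that every table row \((m_0,m_1,c)\) with \(c=\mathsf{Some}(\bar c,e)\) is the support of an honest encryption or evaluation of \(m_b\), together with the fixed key pair. Under this invariant the call relates \(\mathsf{ODec}_{\mathsf{real}}\) to \(\mathsf{ODec}_{\mathsf{sim}}\) with KL cost \(\epsilon_{\mathsf{nf}}\), and both sides preserve the invariant.
\begin{itemize}
\item If the row is refused or the ciphertext is \(\mathsf{None}\), both sides behave identically at cost \(0\).
\item If the row is answered, \Cref{eq:approx-correctness} gives \(d(\mathsf{dec}_0(sk,c),m)\leq e\). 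Applying the chart laws \Cref{eq:charts} rewrites \(\mathsf{Flood}_e(m)\) as \(J_{\mathsf{dec}_0(sk,c)}\) applied to a Gaussian shifted by a vector of \(\ell_\infty\) norm at most \(e\).
\item Data processing through \(J\), followed by coordinatewise additivity of KL for product distributions and \Cref{eq:dg-kl}, yields \Cref{eq:epsilon-nf}.
\end{itemize}
I must also prove that the invariant holds after key generation and is preserved by the other oracles (encryption and evaluation). This is a unary support argument using probability-one correctness.

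\textbf{Lifting to the adaptive adversary.} I apply the trace compiler to \(\cA\) composed with the game, selecting the first \(q\) answered decryption calls. After the \(q\)-th answered call the budget is exhausted in both games, so all remaining decryption calls are refused in both and coincide. This lets me replace them exactly. The quantitative replacement theorem \Cref{thm:compile-replace} then gives an additive-error judgment with loss \(\sqrt{q\epsilon_{\mathsf{nf}}/2}\). Projecting to the Boolean output by data processing yields \(\TVD(G_0,G_1)\leq\sqrt{q\epsilon_{\mathsf{nf}}/2}=\sqrt{qn}/(2\gamma)\).

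I expect the main obstacle to be the bookkeeping in this last step rather than the probability. I would need to check that the compiler's exactness theorem \Cref{thm:compile-same} applies on both sides, so that the selected calls and the residual calls can be opened and relinked without changing behaviour. I would also need to show that "at most \(q\) non-failing decryptions" matches exactly the counter the compiler tracks. My first attempt would be to make the counter part of the invariant, so that residual calls are provably refusals.
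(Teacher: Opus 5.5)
Your proposal is correct and is essentially the paper's proof. The only paid hop \(G_0\to G_1\) comes from a one-call Pythagorean judgment under the correctness-and-counter invariant. You lift it through the trace compiler: same-package exactness opens and relinks the calls, the shared counter makes residual decryptions abort on both sides, and Micciancio--Walter is applied once. This is followed by an exact identification of \(G_1\) with \(\mathsf{IND\text{-}CPA}_S(\cB_{\cA,q})\). The differences are minor. The paper books the one-call cost as the tuple \((0,\epsilon_{\mathsf{nf}},0)\) via shared-prefix and shared-continuation rules, where you use data processing through \(J\). Its last hop uses an exact heap invariant, since the two games' state layouts differ, rather than pure package rewriting. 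For your claimed equality, \(\cB_{\cA,q}\) must also count every decryption call, including refused ones, and abort rather than merely refuse once the budget is spent.
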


The first inequality keeps the per-query bound
\(\epsilon_{\mathsf{nf}}\) visible; substituting
\(\epsilon_{\mathsf{nf}}=n/(2\gamma^2)\) from \Cref{eq:epsilon-nf} gives the
second.
Thus IND-CPA security is an assumption on \(S\), while IND-CPAD security of its
noise-flooded version is the conclusion.

\FloatBarrier

\section{The Reduction and Its Missing Link}
\label{sec:verified-reduction}

\Cref{sec:simulation-target} isolates the local change behind the LMSS
reduction: replace \(\mathsf{ODec}_{\mathsf{real}}\) by the
plaintext-centered \(\mathsf{ODec}_{\mathsf{sim}}\).
The latter can be implemented without the secret key and hence inside an
IND-CPA reduction.
For one call, \Cref{eq:epsilon-nf} bounds the KL cost.
We now lift that local comparison to the complete adaptive experiment.

\subsection{Three games}

Fix an IND-CPAD adversary \(\cA\) making at most \(q\) decryption queries.
Consider the following games:
\begin{align*}
 G_0 &=
   \mathsf{IND\text{-}CPAD}_{\mathsf{NF}_{\gamma}[S],q}(\cA),\\
 G_1 &=
   \mathsf{IND\text{-}CPAD}^{\mathsf{sim}}_
     {\mathsf{NF}_{\gamma}[S],q}(\cA),\\
 G_2 &=
   \mathsf{IND\text{-}CPA}_{S}(\cB_{\cA,q}).
\end{align*}
Game \(G_0\) is the real experiment.
Game \(G_1\) replaces the real decryption oracle by
\(\mathsf{ODec}_{\mathsf{sim}}\) from \Cref{eq:real-sim-decrypt}.
This simulated-decryption game is the intermediate game used by
LMSS~\cite{lmss}.
Game \(G_2\) runs the concrete reduction from
\Cref{alg:ind-cpa-reduction}.

The second transition is exact.
Once decryption answers are centered at the common recorded plaintext, they no
longer require the secret key or the hidden challenge bit.
The reduction can therefore forward encryption queries to its IND-CPA oracle,
perform evaluation and table bookkeeping locally, and simulate every permitted
decryption query with the same distribution as \(G_1\).
Consequently,
\[
 \WinProb[G_1]=\WinProb[G_2].
\]

The entire cryptographic proof is thus the chain
\begin{equation}
\label{eq:conceptual-game-chain}
 G_0
 \xrightarrow{\sqrt{q\epsilon_{\mathsf{nf}}/2}}
 G_1
 \xrightarrow{0}
 G_2.
\end{equation}
The first arrow is the only paid hop.
Together with the assumed IND-CPA bound for \(G_2\), it immediately implies
\Cref{thm:main}.

\subsection{The tempting one-line proof}
\label{sec:paid-hop}

\Cref{eq:epsilon-nf} bounds the conditional KL cost of every reachable
decryption query whose two recorded plaintexts agree by
\(\epsilon_{\mathsf{nf}}\).
The two games agree when those plaintexts differ and when an assertion fails,
so those branches spend no budget.

It is tempting to finish in one sentence:
apply the KL chain rule over the \(q\) decryption interactions, obtain total KL cost
\(q\epsilon_{\mathsf{nf}}\), and apply Pinsker once.
This gives
\begin{equation}
\label{eq:central-game-hop}
 \TVD(G_0,G_1)
 \leq \sqrt{q\epsilon_{\mathsf{nf}}/2}.
\end{equation}
This is the KL-composition argument used by LMSS~\cite{lmss}.
A machine-checked proof must identify the relevant conditional distributions
and establish the one-call bound after every adversarially chosen history.

\subsection{Where adaptivity enters}
\label{sec:adaptive-invariant}

The adversary's decryption calls do not form a fixed product experiment.
After seeing an answer, \(\cA\) may change its state, choose a different oracle,
or decide whether and where to make its next decryption query.
In SSProve, \(\cA\) is one stateful program with calls embedded in its control
flow, not a list of \(q\) pre-existing query phases.
Thus ``the first \(q\) decryption calls'' is a semantic notion, and the
conditional KL bound must hold after every reachable history.

The proof also depends on a state invariant \(\Phi\) relating the real and
simulated executions.
It records initialized, good keys; agreement of their public data and hidden
challenge bits; validity of the challenge table under encryption and
evaluation; and agreement of the decryption counters.
In particular, for every reachable row
\((m_0,m_1,c)\) with \(c=\mathsf{Some}(\bar c,e)\), the plaintext on the
selected branch \(b\) satisfies
\[
 d(\mathsf{dec}_0(sk,c),m_b)\leq e.
\]
Initialization establishes \(\Phi\), while encryption, evaluation, and the
deterministic prefix of decryption preserve it.
This is what makes the one-call KL calculation available after an arbitrary
adaptive history.

A machine-checked proof of \Cref{eq:central-game-hop} therefore needs a bridge
with three properties:
\begin{enumerate}
\item it retains conditional KL costs instead of converting each call to
statistical distance;
\item it exposes successive calls of an arbitrary adaptive oracle program; and
\item it threads \(\Phi\) through the exposed calls.
\end{enumerate}
\Cref{sec:pyth-toolkit} develops the Pythagorean judgment and invariant rules.
\Cref{sec:trace-compiler} verifies the selected-call compiler and proves the
generic local-to-adaptive theorem independently of FHE.  The instantiation in
\Cref{sec:closing-hop} combines them to derive the paid arrow in
\Cref{eq:conceptual-game-chain}.

\section{A Program Logic for Adaptive Games}
\label{sec:pyth-toolkit}

SSProve combines high-level algebraic reasoning about packages with a relational program logic for their low-level procedures~\cite{ssprove}.
Its central bridge turns per-procedure exact judgments into perfect indistinguishability of packages.
We retain that two-level architecture but add the quantitative judgments needed
by noise flooding: they carry conditional KL costs through program composition
and convert them to additive error only once.
This section develops those judgments independently of the FHE instantiation.
The verified compiler in \Cref{sec:trace-compiler} then lifts the resulting
one-call rule to complete adaptive oracle programs.

The development introduces no new programming language and takes no new logic
rules as axioms.
It defines semantic judgments over SSProve's existing semantics, then verifies the soundness of every rule below.
The Pythagorean judgment retains the conditional KL information required by
\Cref{sec:paid-hop}, while unary Hoare reasoning maintains the invariant from
\Cref{sec:adaptive-invariant}.
An additive-error judgment records the ordinary game-hop bound after the final
conversion.

We use SSProve's existing language of typed oracle programs, packages, heaps,
linking, and discrete subdistribution semantics; \Cref{app:ssprove-substrate}
recalls the precise fragment used here.

\subsection{Completed semantics and three judgments}

Because SSProve programs may fail, their denotations are subdistributions and
may have weight below one.
For
a subdistribution \(\mu\) on \(X\), define its completion on
\(X\cup\{\bot\}\) by
\[
 \completed{\mu}(x)=\mu(x),\qquad
 \completed{\mu}(\bot)=1-\sum_x\mu(x).
\]
Completion makes assertion failure observable.
Without this, it becomes tedious (and sometimes tricky) to write down couplings for subdistributions and then estimate the probabilities of postconditions.

The unary judgment
\[
 \vDash\mathsf{Hoare}\{P\}\ c\ \{Q\}
\]
is the standard Hoare triple over the SSProve code.
It states that whenever the precondition \(P\) is satisfied,
the postcondition \(Q\) is satisfied with probability one.

The additive-error judgment
\[
 \vDash\AEJ\{P\}\ c_L\approx_\delta c_R\ \{Q\}
\]
and its inference rules are heavily inspired by EasyCrypt style ``up-to-bad" arguments.
It states that, from any pair of initial configurations satisfying \(P\), the
completed outputs of \(c_L\) and \(c_R\) admit a coupling in which \(Q\) fails
with probability at most \(\delta\).
For equality postconditions, this is the ordinary statistical-distance bound.
Unlike EasyCrypt, SSProve allows its users to supply soundness proofs;
here we verify consequence, sequencing with loss \(\delta_1+\delta_2\),
triangle, projection, and direct-coupling rules.

We next present our Pythagorean judgment formally.

\begin{definition}[Pythagorean judgment]
\label{def:pyth-judgment}
Let \(c_L:X_L\to\code(Y)\) and \(c_R:X_R\to\code(Y)\), and let
\(s=(s_1,\ldots,s_k)\) be a nonempty tuple of real numbers.
For a configuration \(\rho=(x,m)\), write
\(\csemantic{c}_{\rho}=\completed{\semantic{c(x)}(m)}\), and let
\(\Omega=(Y\times\cM)\cup\{\bot\}\) be the completed output/heap space.
We write
\begin{equation}
 \vDash\Pyth\{P\}\ c_L\approx_s c_R\ \{Q\}
\end{equation}
if every \(s_i\) is nonnegative and, for every pair
\(\rho_L=(x_L,m_L)\) and \(\rho_R=(x_R,m_R)\) satisfying
\(P(\rho_L,\rho_R)\), there exist weight-one joint transcript distributions
\[
 \cP,\cQ\in\distr(\Omega^k)
\]
such that
\[
\begin{aligned}
 &\cP_k=\csemantic{c_L}_{\rho_L},
 \\
 &\cQ_k=\csemantic{c_R}_{\rho_R},
 \\
 &\forall i\in\{1,\ldots,k\},\ \forall a\in\Omega^{i-1}.\quad
   \mathsf{finiteKL}\!\left(
     \cP_i\mid a,\cQ_i\mid a\right),
 \\
 &\forall i\in\{1,\ldots,k\},\ \forall a\in\Omega^{i-1}.\quad
   \KL{\cP_i\mid a}{\cQ_i\mid a}\leq s_i,
 \\
 &\forall z\in\supp\semantic{c_L(x_L)}(m_L).\quad Q(z),
 \\
 &\forall z\in\supp\semantic{c_R(x_R)}(m_R).\quad Q(z).
\end{aligned}
\]
\end{definition}

This definition captures the Micciancio--Walter \cite{mw17} Pythagorean
prerequisites for KL divergence.
The transcript is semantic rather than a list of syntactic sampling sites: it may record a
whole program fragment as one coordinate, or expose multiple coordinates for
composition.

This judgment deliberately stores more information than an immediate statistical-distance bound.
Its sequence rule concatenates tuples (i.e. $s\doubleplus s'$) rather than adding square roots, allowing one nonlinear conversion after the full adaptive transcript has been constructed.

\subsection{Core proof rules}
\label{sec:logic-rules}

\Cref{fig:logic-rules} shows the rules that drive the generic development.
They are schematic and suppress most SSProve typing, full-mass, support, and
heap side conditions; the artifact contains their concrete statements.
For completeness, \Cref{fig:security-rule-inventory} in the appendix collects
every rule used by the noise flooding security analysis, including the Hoare, additive-error, derived Gaussian, and
compiler rules omitted here.

\begin{figure*}[t]
\centering
\small
\begin{mathpar}
\inferrule*[right=\textsc{KL-Sample}]
  {\mathsf{finiteKL}(D_L,D_R) \\
   \KL{D_L}{D_R}\leq\epsilon \\
   \mathsf{weight}(D_L)=\mathsf{weight}(D_R)=1 \\
   \forall x\in\supp(D_L)\cup\supp(D_R),\ Q(x)}
  {\vDash\Pyth\{P\}\
     x\leftarrow D_L
     \approx_{[\epsilon]}
     x\leftarrow D_R\ \{Q\}}

\inferrule*[right=\textsc{Pyth-Refl}] {P\Longrightarrow x_L=x_R\wedge m_L=m_R \\ 0\leq s_i\ \text{for every }i} {\vDash\Pyth\{P\}\ c\approx_s c\ \{Q\}}

\inferrule*[right=\textsc{Pyth-Seq}] {\vDash\Pyth\{P\}\ c_L\approx_s c_R\ \{M\} \\ \vDash\Pyth\{M^{=}\}\ k_L\approx_t k_R\ \{Q\}} {\vDash\Pyth\{P\}\ (x\leftarrow c_L;k_L(x)) \approx_{s\doubleplus t} (x\leftarrow c_R;k_R(x))\ \{Q\}}

\inferrule*[right=\textsc{Shared-Prefix}] {\vDash\mathsf{Hoare}\{P\}\ c\ \{M\} \\ \vDash\Pyth\{M^{=}\}\ k_L\approx_s k_R\ \{Q\}} {\vDash\Pyth\{P\}\ (x\leftarrow c;k_L(x)) \approx_{[0]\doubleplus s} (x\leftarrow c;k_R(x))\ \{Q\}}

\inferrule*[right=\textsc{Micciancio--Walter}]
  {\vDash\Pyth\{P\}\ c_L\approx_s c_R\ \{Q\}}
  {\vDash\AEJ\{P\}\ c_L
     \approx_{\sqrt{\norm{s}_1/2}}c_R\ \{o_L=o_R\}}
\end{mathpar}
\caption{Representative checked rules.
Here \(M^{=}\) requires equal
intermediate values and heaps satisfying \(M\), and \(\doubleplus\) denotes
tuple concatenation.
The Rocq theorems additionally expose further typing,
support, invariant, and nonnegativity premises.}
\label{fig:logic-rules}
\end{figure*}

\textsc{KL-Sample} embeds a distribution-level KL bound as a program judgment.
\textsc{Pyth-Refl} relates a program to itself at zero cost.
\textsc{Pyth-Seq} is the essential rule: its proof constructs a joint transcript for semantic bind and proves conditional bounds for both valid and zero-mass prefixes.
The development also contains a shared-continuation rule, consequence rules,
variants for fully linked programs, and the additive-error rules mentioned
above.

The mixed Hoare/Pythagorean rules have been useful in practice.
Deterministic prefixes, table operations, other oracle calls, and postprocessing are proved to preserve an invariant with the unary Hoare logic and inserted as zero-cost coordinates in the Pythagorean judgment.
Only the sampling fragment whose distribution changes consumes KL budget.
In the noise-flooding application of \Cref{sec:closing-hop}, this decomposition
produces the tuple \((0,\epsilon_{\mathsf{nf}},0)\).

\subsection{The Micciancio-Walter Probability Lemma}

The \textsc{Micciancio--Walter} rule in \Cref{fig:logic-rules} rests on the
following distribution-level theorem.
Let \(\mathcal P,\mathcal Q\) be joint distributions on \(\Omega^k\), and write \(\mathcal P_i\mid a\) for coordinate \(i\) conditioned on prefix \(a\in\Omega^{i-1}\).

\begin{theorem}[Conditional-coordinate preservation]
\label{thm:pyth-probability}
Suppose \(\mathcal P,\mathcal Q\) have weight one and, for every coordinate
\(i\) and prefix \(a\), the conditional KL divergence is finite and
\[
 \KL{\mathcal P_i\mid a}{\mathcal Q_i\mid a}\leq s_i,
 \qquad s_i\geq0.
\]
Then their final-coordinate marginals satisfy
\[
 \TVD(\mathcal{P}_k,
      \mathcal{Q}_k)
 \leq \sqrt{\frac{\sum_{i=1}^{k}s_i}{2}}.
\]
\end{theorem}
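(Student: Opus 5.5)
The plan is to combine the KL chain rule, data processing, and Pinsker's inequality, all of which the paper lists among its verified probability results. Work with the full joint distributions \(\mathcal P,\mathcal Q\) on \(\Omega^k\), which has weight one. Write \(\mathcal P_{\le i}\) for the marginal on the first \(i\) coordinates. The proof then has three steps.

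\textbf{Step 1 (chain rule by induction).} I would show by induction on \(i\) that \(\KL{\mathcal P_{\le i}}{\mathcal Q_{\le i}}\) is finite and bounded by \(\sum_{j\le i}s_j\). The inductive step rests on the decomposition
\[
 \KL{\mathcal P_{\le i}}{\mathcal Q_{\le i}}
 =\KL{\mathcal P_{\le i-1}}{\mathcal Q_{\le i-1}}
 +\sum_{a}\mathcal P_{\le i-1}(a)\,\KL{\mathcal P_i\mid a}{\mathcal Q_i\mid a}.
\]
The second term is at most \(s_i\), since it is a \(\mathcal P_{\le i-1}\)-weighted average of quantities each bounded by \(s_i\) and the weights sum to one.

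To derive the decomposition, factor \(\mathcal P_{\le i}(a,\omega)=\mathcal P_{\le i-1}(a)\,(\mathcal P_i\mid a)(\omega)\), and likewise for \(\mathcal Q\). Split the logarithm of the ratio into two terms and rearrange the double series. Absolute continuity of \(\mathcal P_{\le i}\) with respect to \(\mathcal Q_{\le i}\) follows from the inductive hypothesis together with the finiteness of each conditional KL.

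\textbf{Step 2 (data processing).} The final-coordinate marginal \(\mathcal P_k\) is the image of \(\mathcal P=\mathcal P_{\le k}\) under the projection \(\Omega^k\to\Omega\), and the same holds for \(\mathcal Q_k\). I would avoid proving data processing for KL. Instead, apply Pinsker to the joint distributions to get \(\TVD(\mathcal P,\mathcal Q)\le\sqrt{\KL{\mathcal P}{\mathcal Q}/2}\le\sqrt{\sum_i s_i/2}\). Then use the elementary fact that statistical distance does not increase under a deterministic map: \(|\sum_{x\mapsto y}(P(x)-Q(x))|\le\sum_{x\mapsto y}|P(x)-Q(x)|\). This is the cheaper route in a formal setting.

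\textbf{Step 3 (zero-mass prefixes).} For a prefix \(a\) with \(\mathcal P_{\le i-1}(a)=0\), the conditional \(\mathcal P_i\mid a\) is defined by some convention. The hypotheses cover it, but it contributes zero weight to the average in Step 1. Prefixes with \(\mathcal Q_{\le i-1}(a)=0\) but positive \(\mathcal P\)-mass cannot occur, because absolute continuity is already established at level \(i-1\).

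The main obstacle is Step 1 over a countable \(\Omega\). Splitting the log and swapping the sums requires summability of the split terms, not merely of the total. The negative parts of \(p\log(p/q)\) must be controlled; I would use the bound \(p\log(p/q)\ge p-q\), which is summable, to justify Fubini/Tonelli for each piece. I would first try to prove the chain rule for nonnegative extended-real sums, where rearrangement is free, and only afterwards recover the real-valued finite statement.
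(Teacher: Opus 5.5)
Your proposal is correct and follows the paper's route. You prove a chain bound $\KL{\mathcal P}{\mathcal Q}\leq\sum_i s_i$, controlling negative parts for summability and handling zero-mass prefixes separately, then apply Pinsker to the full transcripts, then use contraction of statistical distance under projection to the final coordinate. The only difference is organizational: you prove the chain rule by induction on prefix length, whereas the paper decomposes the joint log-likelihood ratio over all coordinates at once, grouping traces by prefix and current value and summing away the suffix.
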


At paper level, the proof expands joint mass as prefix mass times conditional mass and groups the log likelihood ratio by coordinate.
The resulting chain bound gives \(\KL{\mathcal P}{\mathcal Q}\leq\sum_i s_i\).
Pinsker bounds the transcript distance, and data processing gives the displayed final-marginal bound.
The Rocq proof makes explicit absolute continuity, conditional zero-mass prefixes, summability of positive and negative parts, and exchanges between finite coordinate sums and countable support sums.
We further discuss this proof in the Appendix \Cref{sec:verified-kl-analysis}.

Our \textsc{Micciancio--Walter} rule instantiates this theorem with the transcripts from \Cref{def:pyth-judgment}, then uses maximal coupling to obtain the additive-error equality judgment.
As with all of our inference rules, we verify it to be sound with respect to SSProve's program semantics.

\subsection{A derived discrete-Gaussian rule}

For center \(\mu\in\bbZ\) and \(\sigma>0\), we denote the discrete Gaussian distribution by
\[
 D_{\bbZ,\mu,\sigma^2}(x)
 =\frac{\exp(-(x-\mu)^2/(2\sigma^2))}
        {\sum_{z\in\bbZ}\exp(-(z-\mu)^2/(2\sigma^2))}.
\]
The development proves normalization, translation invariance of the
normalizer, summability, the centered first moment, full support, and the
equal-variance KL identity in \Cref{eq:dg-kl}.
The calculation expands the
logarithm of the ratio; equal variances cancel both normalizers, and symmetry
of the centered distribution cancels the linear moment.
We again defer the proof details until \Cref{sec:verified-kl-analysis}.

These facts yield a reusable program rule for vector sampling.
If
\(u,v\in\bbZ^n\), \(\sigma>0\), and every coordinate satisfies
\[
 \frac{(v_i-u_i)^2}{2\sigma^2}\leq\epsilon,
\]
then the $n$-fold product sampler satisfies a Pythagorean
judgment with budget \([n\epsilon]\).
The analysis first constructs
an \(n\)-coordinate transcript, proves the scalar bound coordinatewise, and
then transports and aggregates it into the program-level sampling rule.
The
noise-flooding proof instantiates \(\epsilon=1/(2\gamma^2)\), obtaining
\(\epsilon_{\mathsf{nf}}=n/(2\gamma^2)\) without introducing any unverified assumptions.

\subsection{Why the extra judgment is necessary}

The additive-error layer remains essential for exact hops, up-to-bad arguments, and the final triangle calculation.
Its ordinary sequence rule, however, maps \(\delta_1,\delta_2\) to \(\delta_1+\delta_2\).
Applying Pinsker inside each decrypt proof would therefore pay \(q\sqrt{\epsilon_{\mathsf{nf}}/2}\).
The Pythagorean layer delays that nonlinear conversion and obtains \(\sqrt{q\epsilon_{\mathsf{nf}}/2}\).
It provides a verified relational interface for carrying parameter-critical KL
information across stateful program composition.

\section{A Verified Compiler for Adaptive Oracle Programs}
\label{sec:trace-compiler}

The rules above compose a sequence of sampling operations.
An oracle program does not provide such a sequence: the location and arguments of its next call may depend on previous answers.
Our compiler is the verified adapter between these views.
It exposes a bounded number of calls to one selected operation without
restricting the surrounding program's control flow, and it comes with both a
same-package correctness theorem and a generic quantitative replacement theorem.

\subsection{A trace-based selected-call transformation}

Given a program \(A\) and a chosen operation \(p\), a paper proof can simply
say: run \(A\) until its next call to \(p\), answer that call, and resume.
In SSProve, the code after a call is represented by a
Rocq function from the answer to the rest of the program.  Such arbitrary
functions cannot be serialized into SSProve program data.  The compiler therefore records a
trace of the intervening calls, heap operations, and samples.
When it reaches \(p\), it returns the query without executing the call.  The
driver obtains an answer from a chosen implementation \(P_s\), adds it to the
trace, replays the recorded results through \(A\) to recover the code after the
call, and repeats this process up to \(q\) times.

Write \(\mathsf{Expose}_q(A,p;P_s)\) for the raw-code transformation that
exposes the first \(q\) calls made by \(A\) to operation \(p\) and resolves
them with \(P_s\).  Its factoring phase is trace-valued, but replay restores
the residual program, so the complete transformation has the same result type
as \(A\).  Write \(\mathsf{Compile}_q(A,p;P_s,P_r)\) for
\(\mathsf{Expose}_q(A,p;P_s)\) linked with residual package \(P_r\).

\subsection{Same-package correctness}

The transformation inserts trace bookkeeping, so before linking its syntax
differs from that of \(A\).  If both the exposed calls and the residual calls
are implemented by \(P\), however, each call receives the same answer as in
the original linked program, and trace replay reconstructs the same remaining
code.  The compiler therefore preserves behavior in the same-package case.

\begin{theorem}[Same-package compilation]
\label{thm:compile-same}
Subject to code and package validity and the presence of the selected
operation, for every initial heap,
\[
 \mathsf{Compile}_q(A,p;P,P)
 \quad\text{and}\quad
 \mathsf{Link}(A,P)
\]
have identical SSProve output/heap semantics.
\end{theorem}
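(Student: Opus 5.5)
The proof goes by induction on the exposure budget \(q\), and within each step by structural induction on the raw code of \(A\). First we fix the semantics we compare against. SSProve gives \(\mathsf{Link}(A,P)\) by substituting each call to an operation \(o\) in \(A\) with \(P\)'s implementation of \(o\). Its output/heap semantics is then the usual monadic interpretation of samples, heap reads and writes, and assertions.

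The core lemma concerns the factoring phase and is a replay invariant. Let \(t\) be a trace of recorded call results, heap operations, and samples, and let \(\mathsf{Replay}(A,t)\) be the continuation obtained by feeding \(t\) through \(A\). Running the trace-recording factor of \(\mathsf{Replay}(A,t)\), linked with \(P\) for all non-\(p\) calls, yields a distribution over two kinds of result.
\begin{enumerate}
\item A final value, with the same distribution as \(\mathsf{Link}(\mathsf{Replay}(A,t),P)\) on paths that reach no \(p\)-call.
\item A pair \((x,t')\) with \(x\) the argument of the next \(p\)-call.
\end{enumerate}
In the second case, \(\mathsf{Replay}(A,t\doubleplus t')\) is syntactically the continuation at that call, awaiting its answer.

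We prove this by induction on the code of \(A\), one case per node. A sample or heap operation is executed and its result appended to the trace. A call to \(o\neq p\) is answered by \(P\) and recorded. A call to \(p\) stops the factor. For each case we check that replay of the extended trace agrees with the corresponding bind in the original code. These are monad-law rewrites, plus functional extensionality on the continuations, with result types kept fixed.

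Next comes the driver step. Suppose the factor stops at \((x,t')\). The driver samples \(y\) from \(P\)'s implementation of \(p\) at \(x\) in the current heap, appends \(y\), and recurses with budget \(q-1\). By the lemma this agrees with answering the call in place and continuing \(\mathsf{Link}(\cdot,P)\). So \(\mathsf{Expose}_q(A,p;P)\) linked with \(P\), started from \(\mathsf{Replay}(A,t)\), equals \(\mathsf{Link}(\mathsf{Replay}(A,t),P)\), by induction on \(q\). In the base case \(q=0\), \(\mathsf{Expose}_0\) just replays the trace and runs the residual program. All remaining \(p\)-calls go to the residual package, which is again \(P\), so the two sides coincide directly. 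Instantiating \(t=[\,]\) gives the theorem for every initial heap.

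The main obstacle is the replay lemma. Heap effects in the recorded trace must be replayed \emph{without re-executing} them: the heap at the exposure point is the one the real run produced. So replay must consume recorded results purely syntactically, while the actual heap is threaded only once through the driver. I would state the lemma relative to an arbitrary heap \(m\) and prove it as an equality of SSProve denotations \(\semantic{\cdot}(m)\), not of syntax.

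Trace/type mismatches also need care. A recorded result whose type does not match the next node of \(A\) can only occur off the support, and is dispatched as a dead branch. The validity hypotheses (code validity, package validity, \(p\) present in \(P\)'s interface) are exactly what rule out such mismatches and guarantee that linking resolves each call.
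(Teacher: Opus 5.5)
Your proposal is correct and takes essentially the same approach as the paper, which proves this theorem by induction over the selected-call traversal: each recorded result is supplied to the corresponding continuation of the original program (your replay/factoring lemma), and code and package validity rule out malformed traces. Your split into a structural factoring lemma plus an outer induction on \(q\), stated as an equality of denotations at an arbitrary heap because replay does not re-execute heap effects, is a more explicit rendering of that same induction.
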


The proof is an induction over the selected-call traversal.  At each step, the
recorded result is supplied to the corresponding continuation of the original
program, while validity rules out malformed traces.  Consequently, the
compiled and original programs induce the same output--heap distribution.

\subsection{A generic local-to-adaptive replacement theorem}

The following theorem is the main program-logic bridge.

\begin{theorem}[Replace $q$ calls]
\label{thm:compile-replace}
Let \(P_I\) synchronize equal operation inputs and equal heaps satisfying
\(I\).  Suppose two implementations \(P,P'\) of an operation \(p\) satisfy
the one-call Pythagorean judgment
\[
 \vDash\Pyth\{P_I\}\ P.p\approx_s P'.p\ \{I\}
\]
for a nonnegative tuple \(s\), preserve \(I\), and meet the compiler's typing,
footprint, and memory-separation conditions.
For every concrete SSProve program \(A\), replacing its first
\(q\) selected calls yields
\[
\begin{aligned}
 &\vDash\AEJ\{P_I\}\ \mathsf{Compile}_q(A,p;P,P)\\
 &\quad\approx_{\sqrt{q\norm{s}_1/2}}
   \mathsf{Compile}_q(A,p;P',P)\ \{o_L=o_R\},
\end{aligned}
\]
\end{theorem}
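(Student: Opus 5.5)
The plan is to build, for the two compiled programs, a single Pythagorean judgment with budget \(s\doubleplus s\doubleplus\cdots\doubleplus s\) (\(q\) copies, interleaved with zero-cost coordinates), and then to apply the \textsc{Micciancio--Walter} rule once. Since zero entries do not change the \(\ell_1\) norm, the resulting budget has \(\norm{\cdot}_1=q\norm{s}_1\). This yields exactly \(\sqrt{q\norm{s}_1/2}\) with equality postcondition \(o_L=o_R\). Everything therefore reduces to showing that \(\mathsf{Compile}_q(A,p;P,P)\) and \(\mathsf{Compile}_q(A,p;P',P)\) decompose syntactically into \(q\) rounds. Each round consists of a shared segment followed by one selected call, and the \(q\) rounds are followed by a shared residual phase.

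I will proceed by induction on \(q\), following the structure of \(\mathsf{Expose}_q\). The invariant is the relational precondition \(P_I\) (equal trace/state, equal heaps satisfying \(I\)). In one round, the factoring phase runs \(A\) (replayed on the current trace) until it reaches the next call to \(p\), or until it terminates. This phase is identical code on both sides; its external calls go to the residual package, which is \(P\) on both sides. By the footprint and memory-separation hypotheses, together with the assumption that \(P\) and \(P'\) preserve \(I\), I will prove a unary \(\vDash\mathsf{Hoare}\{I\}\ \cdot\ \{I\}\) for this segment. I then insert it with \textsc{Shared-Prefix} at cost \([0]\). The exposed call itself is \(P.p\) against \(P'.p\) on equal inputs and equal heaps satisfying \(I\), so the one-call hypothesis gives \(\Pyth\) with budget \(s\) and postcondition \(I\). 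The postcondition only guarantees \(I\), so I must strengthen it to the \(M^{=}\)-shaped precondition required by the next \textsc{Pyth-Seq}. I plan to handle this by stating the round judgment with postcondition ``heaps satisfy \(I\)'' and letting \textsc{Pyth-Seq} pass \(M^{=}\), which already demands equal intermediate values. The appended trace entry is a deterministic function of the answer, so equality of answers and heaps propagates. If \(A\) terminates before a call is exposed, the remaining rounds become skips. I take these branches at zero cost using \textsc{Pyth-Refl}, padding with nonnegative budgets so that the tuple shape does not depend on the path. After the \(q\) rounds, the replay and residual execution is again shared code linked with \(P\), and it is closed by \textsc{Shared-Prefix}/\textsc{Pyth-Refl} with postcondition equality.

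The main obstacle is this path-dependence combined with the fixed tuple length in \Cref{def:pyth-judgment}. The number of calls actually exposed, and where they happen, is adaptive, but the Pythagorean judgment needs a single budget tuple across all branches. My first attempt is to make each round a single coordinate-block of fixed shape \([0]\doubleplus s\) regardless of branch. On the ``no call'' branch I prove the judgment by reflexivity; this is allowed because \textsc{Pyth-Refl} accepts any nonnegative \(s\). Doing this requires a branching (if/match) rule for \(\Pyth\) with a common budget. If such a rule is not among the checked rules, I would prove it semantically by building the transcript as a mixture over the branch condition, which is equal on both sides. A secondary obstacle is justifying that the trace-replay code is literally identical on both sides and that it respects \(I\). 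For this I rely on the typing and footprint conditions, and on the same trace-validity reasoning used in the proof of \Cref{thm:compile-same}.
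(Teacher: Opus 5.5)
Your proposal is correct and follows the paper's proof: induction on \(q\) over the compiled driver, with zero-cost shared segments and \textsc{Pyth-Seq} threading \(I\) so that each exposed call contributes \(s\), then a single \textsc{Micciancio--Walter} conversion on a tuple of \(\ell_1\)-norm \(q\norm{s}_1\). The branching concern needs no mixture construction. \(\Pyth\) is quantified pointwise over related configurations and \(M^{=}\) hands the continuation equal intermediate values, so early termination is a plain case split closed by \textsc{Pyth-Refl} with a padded budget.
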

\begin{proof}[Proof sketch]
We first show that \(\mathsf{Compile}_q(A,p;P,P)\) and
\(\mathsf{Compile}_q(A,p;P',P)\) satisfy the expected Pythagorean judgment with
\(q\) concatenated copies of the tuple \(s\), by induction on \(q\) and the
Pythagorean sequence rule.
Each exposed call contributes \(s\) and preserves the invariant \(I\).

After \(q\) calls, the accumulated tuple has norm
\(q\norm{s}_1\); \textsc{Micciancio--Walter} is applied once, giving the
desired additive error bound.
\end{proof}

The theorem is generic in the program, packages, selected oracle, invariant, and budget tuple.
The noise-flooding game supplies only the one-call premise proved in
\Cref{sec:closing-hop}; the program logic and compiler then handle adversarial
control flow.
Combined with \Cref{thm:compile-same}, this is our quantitative counterpart of SSProve's local relational reasoning for package indistinguishability.

\subsection{Instantiating the compiler for noise flooding}
\label{sec:closing-hop}

We now instantiate the generic machinery to prove the missing arrow in
\Cref{eq:conceptual-game-chain}.
Let \(C_{\cA}^{\mathsf{open}}\) be the adversary-facing IND-CPAD game code
before its oracle implementations are linked; its encryption, evaluation, and
decryption calls remain unresolved.
Let \(P_{\mathsf{real}}\) implement the real encryption, evaluation, and
flooded-decryption oracles, and let \(P_{\mathsf{sim}}\) differ only by
flooding around the common recorded plaintext when the row's two plaintexts
agree.

The selected decryption body factors into a deterministic prefix, one Gaussian
sample, and deterministic postprocessing.
The invariant \(\Phi\) from \Cref{sec:adaptive-invariant} gives
\(d(\mathsf{dec}_0(sk,c),m)\leq e\) on every reachable row whose recorded
plaintexts agree.
Translation of charts by \Cref{eq:charts} puts the two answer distributions
over the same coordinate space.
Applying \Cref{eq:dg-kl} in \(n\) coordinates with
\(\sigma_e=\max\{1,e\}\gamma\) yields the following local judgment.

\begin{lemma}[One-call replacement]
\label{lem:one-call}
For any decryption input and paired heaps satisfying \(\Phi\), the real and
simulated decryption bodies satisfy a Pythagorean relational judgment with
budget tuple
\[
 (0,\epsilon_{\mathsf{nf}},0),
 \qquad
 \epsilon_{\mathsf{nf}}=\frac{n}{2\gamma^2},
\]
and restore \(\Phi\).
\end{lemma}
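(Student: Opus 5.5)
The decryption bodies factor into three pieces, and the logic rules are applied piece by piece. Both $\mathsf{ODec}_{\mathsf{real}}$ and $\mathsf{ODec}_{\mathsf{sim}}$ decompose as a common deterministic prefix $c_{\mathsf{pre}}$, a Gaussian sampling step, and a common deterministic postprocessing $c_{\mathsf{post}}$. The prefix performs the bounds check, table lookup, counter update, and the test $m_0=m_1$. The sampling step is $\eta\leftarrow D_{\bbZ^n,0,\sigma_e^2}$, with the result read through the appropriate chart. The postprocessing returns $\mathsf{Some}(y)$ and leaves the heap untouched apart from what the prefix already did.

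The prefix is handled by a unary Hoare triple $\vDash\mathsf{Hoare}\{\Phi\}\ c_{\mathsf{pre}}\ \{M\}$. Here $M$ says that $\Phi$ still holds on the updated heap, and that either the branch is a refusal or failure (identical on both sides) or a row $(m,m,c)$ with $c=\mathsf{Some}(\bar c,e)$ has been fetched. In the latter case the invariant supplies $d(\mathsf{dec}_0(sk,c),m)\leq e$. \textsc{Shared-Prefix} then contributes the leading coordinate $0$. The postprocessing is shared code applied to equal values and heaps, so it is absorbed by \textsc{Pyth-Refl} together with \textsc{Pyth-Seq}, contributing the trailing $0$. Because \textsc{Pyth-Seq} concatenates tuples, the total budget is $(0,\epsilon_{\mathsf{nf}},0)$. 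Restoration of $\Phi$ is the postcondition of the last step: the only heap change is the counter update, which $M$ already covers, and the output is a message that $\Phi$ does not constrain.

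The middle coordinate needs a single Pythagorean judgment $\Pyth\{M^{=}\}$ with budget $[\epsilon_{\mathsf{nf}}]$. To get it, rewrite both samplers in the common chart based at $a=\mathsf{dec}_0(sk,c)$.

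The real side is $J_a(\eta)$. By the change-of-basepoint law, the simulated side $J_m(\eta)$ equals $J_a(\eta+I_a(m))$. The simulated answer is therefore $J_a$ applied to a sample from $D_{\bbZ^n,u,\sigma_e^2}$ with $u=I_a(m)$, using translation invariance of the discrete Gaussian. The distance law gives $\norm{u}_\infty=d(a,m)\leq e$, so every coordinate satisfies
\[
u_i^2/(2\sigma_e^2)\leq e^2/(2\max\{1,e\}^2\gamma^2)\leq 1/(2\gamma^2).
\]
The derived discrete-Gaussian vector rule then gives budget $[n/(2\gamma^2)]=[\epsilon_{\mathsf{nf}}]$ for sampling the integer vectors. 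This transfers to the message outputs by postcomposing both sides with the same map $J_a$, which is data processing for KL. In the logic this is one more shared-continuation step. Since that step is deterministic and does not open a new transcript coordinate, it should be folded into the sampling coordinate rather than appended as an extra one. On the non-agreeing and failure branches both sides run identical code, so the Pythagorean judgment holds at any nonnegative budget by \textsc{Pyth-Refl}, and the branch split still yields the uniform tuple.

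I expect the main obstacle to be this middle step, not the arithmetic. The chart rewriting must be an exact distributional equality inside SSProve's semantics, so that both sides become $J_a$ applied to Gaussians with different centers in one copy of $\bbZ^n$. The case split on the row must also produce the same tuple shape on every branch, since \textsc{Pyth-Seq} needs a fixed $s$. I would first prove a lemma that, for fixed $a$, $m$, $e$ and $\sigma_e$, the program $\eta\leftarrow D_{\bbZ^n,0,\sigma_e^2};\ \mathsf{ret}\,J_m(\eta)$ has the same semantics as $\eta\leftarrow D_{\bbZ^n,u,\sigma_e^2};\ \mathsf{ret}\,J_a(\eta)$. I would then state the judgment with $a$ computed from the paired heaps, which are equal under $M^{=}$, so that the data-dependent center is legitimate after conditioning on the prefix.
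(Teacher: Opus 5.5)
Your proposal is correct and follows essentially the paper's route: a Hoare triple with \textsc{Shared-Prefix} for the deterministic prefix, and rewriting both answers in the chart based at $\mathsf{dec}_0(sk,c)$ so that the discrete-Gaussian vector rule applies with centers $0$ and $I_{\mathsf{dec}_0(sk,c)}(m)$ and per-coordinate bound $1/(2\gamma^2)$; add a zero-cost shared continuation and exact agreement on the refusing and failing branches, and you have the paper's proof. The one deviation is unnecessary: instead of folding $J_a$ into the sampling coordinate via KL data processing (the paper's Gaussian rule only permits an injective pushforward, and $J_a$ need not be injective), keep $z\mapsto\mathsf{Some}(J_a(z))$ as the single shared deterministic continuation, which is exactly where the trailing $0$ comes from.
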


The zero entries record the shared prefix and postprocessing.
Branches with differing recorded plaintexts, as well as assertion-failing
branches, agree exactly.
Applying \Cref{thm:compile-replace} to \Cref{lem:one-call} concatenates one
such tuple for each of the first \(q\) decryption calls, preserving all
dependencies between queries.
The \textsc{Micciancio--Walter} rule is then invoked once, after the accumulated
KL budget is \(q\epsilon_{\mathsf{nf}}\), proving
\[
 \TVD(G_0,G_1)\leq\sqrt{q\epsilon_{\mathsf{nf}}/2}.
\]
Data processing projects completed output/heap pairs to the returned Boolean.
This proves \Cref{eq:central-game-hop}.

\smallskip\noindent\emph{From games to program representations.}
The \(G_i\) in \Cref{eq:conceptual-game-chain} are complete experiments: the
adversary has already been linked to its oracle implementations, so no
unresolved calls remain.
This is the usual cryptographic view of a game, but it hides the program
structure needed by \Cref{thm:compile-replace}.
The formal proof therefore makes the linking and call-compilation steps
explicit.
For selected-call implementation \(P_s\) and residual implementation \(P_r\),
write
\[
 H(P_s,P_r)=
 \mathsf{Compile}_{q}
   (C_{\cA}^{\mathsf{open}},\mathsf{ODec};P_s,P_r).
\]
Then
\begin{align*}
 H_0 &= G_0,\\
 H_1 &= \mathsf{Link}(C_{\cA}^{\mathsf{open}},P_{\mathsf{real}}),\\
 H_2 &= H(P_{\mathsf{real}},P_{\mathsf{real}}),\\
 H_3 &= H(P_{\mathsf{sim}},P_{\mathsf{real}}),\\
 H_4 &= H(P_{\mathsf{sim}},P_{\mathsf{sim}}),\\
 H_5 &= \mathsf{Link}(C_{\cA}^{\mathsf{open}},P_{\mathsf{sim}}),\\
 H_6 &= G_2.
\end{align*}
This is an implementation-level factorization of the two arrows in
\Cref{eq:conceptual-game-chain}, not seven cryptographic hybrids:
\(H_0,H_1,H_2\) are real-oracle representations of \(G_0\);
\(H_3,H_4,H_5\) are simulated-oracle representations of \(G_1\); and
\(H_6\) is the reduction game \(G_2\).
Rocq verifies
\begin{equation}
\label{eq:hybrid-chain}
 H_0\xrightarrow{0}H_1\xrightarrow{0}H_2
 \xrightarrow{\sqrt{q\epsilon_{\mathsf{nf}}/2}}H_3
 \xrightarrow{0}H_4\xrightarrow{0}H_5\xrightarrow{0}H_6.
\end{equation}
Here \(H_0=H_1\) unfolds package linking, and \(H_1=H_2\) is same-package
compiler correctness.
The single paid step \(H_2\to H_3\) is the application above.
For \(H_3=H_4\), the shared counter ensures that any residual decryption call
after the \(q\) selected entries assertion-fails on both sides.
The equality \(H_4=H_5\) removes the compiler using
\Cref{thm:compile-same}.
Finally, \(H_5\to H_6\) relates the simulated IND-CPAD presentation to the
concrete IND-CPA reduction by an exact heap invariant; the artifact factors
this relation through auxiliary linked programs.

Collapsing the zero-cost equalities in \Cref{eq:hybrid-chain} recovers the
three-game chain in \Cref{eq:conceptual-game-chain}; only \(H_2\to H_3\)
changes a distribution.
Triangle composition of \Cref{eq:hybrid-chain}, followed by the assumed
IND-CPA bound at \(H_6\), proves \Cref{thm:main}.

\section{Mechanization and Trusted Base}
\label{sec:artifact}

The artifact contains 27,987 lines of Rocq.
Since our Rocq development is heavily AI-assisted,
this line count does not accurately reflect our engineering effort.
The overview in \Cref{fig:theory-map} separates the application-specific
security argument from the reusable probability and program-logic results.

The mechanization contains a few layers.  The scheme layer
defines encryption games and the noise-flooded transform.  The
probability layer develops facts about countable discrete distributions,
without reference to programs.  The logic layer lifts those facts to SSProve
code.  The compiler layer implements the trace transformation and derives the
generic adaptive replacement rule.  Finally, the security
layer instantiates the reusable logic and compiler with the decryption
invariant, constructs the IND-CPA reduction, and composes the exact and
quantitative game transitions.
The top-level result is a functor over precisely the scheme, chart,
probability-one correctness, IND-CPA security, and flooding-parameter
interfaces described in \Cref{sec:construction}; probability lemmas and
program rules are proved dependencies.

\subsection{Formalization of judgments and rules}

We follow SSProve's convention and engineering framework: pure computation is
embedded in Rocq, while probabilistic and stateful effects are represented by
SSProve's free-monadic program syntax, packages, heaps, and linking operations
\cite{ssprove}.  The new logic does not extend Rocq with a primitive judgment
or a collection of trusted proof rules.
Judgments are formalized as ordinary Rocq propositions over the denotation \texttt{Pr\_code},
and inference rules are lemmas which prove such judgments.

For example, the exact definition of the additive-error judgment is shown
below.  Its types range over SSProve programs and heaps; \texttt{complete}
adds an explicit outcome for missing subdistribution mass.

\begingroup
\setcoqmonofont
\begin{minted}{coq}
Definition additiveErrorJudgment
  {inL_t inR_t outL_t outR_t : ord_choiceType}
  (progL : inL_t -> raw_code outL_t)
  (progR : inR_t -> raw_code outR_t)
  (pre : pred ((inL_t * heap) * (inR_t * heap)))
  (post : pred (option (outL_t * heap) * option (outR_t * heap)))
  (ε : R) : Prop :=
  0 <= ε /\
  ∀ memL memR xL xR, pre ((xL, memL), (xR, memR)) →
    let out1 := Pr_code (progL xL) memL in
    let out2 := Pr_code (progR xR) memR in
    ∃ d, coupling d (complete out1) (complete out2) ∧
      \P_[ d ] post >= 1 - ε.
\end{minted}
\endgroup

Thus, for every related pair of inputs and initial heaps, the proposition
requires a coupling of the completed output-and-heap distributions in which
the postcondition holds with probability at least \(1-\varepsilon\).
The Pythagorean judgment is encoded in the same style.  For each related input
pair it existentially quantifies two full transcript distributions, requires
their final marginals to equal the two completed program denotations, and
requires their supports to satisfy the postcondition.  Its predicate
\texttt{pythDist} additionally imposes full mass and the coordinatewise
conditional-KL bounds from \Cref{def:pyth-judgment}.
To place every result type in one transcript space, the Rocq definition
injectively encodes typed output/heap pairs in \texttt{nat * heap} and
represents \(\bot\) by \texttt{None}.

The rules in \Cref{fig:logic-rules} are proved implications between these
definitions.  For instance, the source-level statement of the final
KL-to-additive-error bridge is:

\begingroup
\setcoqmonofont
\begin{minted}{coq}
Lemma MicciancioWalterRule
  {ℓ : nat}
  {inL_t inR_t out_t : choice_type}
  (progL : inL_t -> raw_code out_t)
  (progR : inR_t -> raw_code out_t)
  (pre : pred ((inL_t * heap) * (inR_t * heap)))
  (post : pred (out_t * heap))
  (s : (ℓ.+1).-tuple R) :
  ⊨Pyth ⦃ pre ⦄ progL ≈( s ) progR ⦃ post ⦄ ->
  let delta := pythagorean_tv_bound s in
  ⊨AE ⦃ pre ⦄ progL ≈( delta ) progR ⦃
    fun outs =>
    let '(outL, outR) := outs in
    outL == outR ⦄.
\end{minted}
\endgroup

Here \texttt{pythagorean\_tv\_bound s} is
\(\sqrt{\sum_i s_i/2}\).  Its proof applies the verified probability theorem
to the transcript witnesses, projects to their final marginals, and invokes
the completed-distribution coupling theorem.  The sampling, sequencing,
mixed Hoare/Pythagorean, and compiler rules have the same status: each is a
Rocq lemma proved from the program semantics.  Consequently, a Rocq kernel check of
the security theorem also checks the soundness chain from the distribution
lemmas through the program judgments.

\subsection{Compiler implementation and correctness}

\Cref{fig:compiler-code} shows the traversal over SSProve's \texttt{raw\_code}
syntax.  Its result pairs the trace with a status: \texttt{inl} carries the
encoded input of a selected call, whereas \texttt{inr} carries the final output
of a program that terminated first.  A nonselected call is executed, its
result is appended to the trace, and traversal continues; heap operations and
samples are handled analogously.

\begin{figure}[t]
\begingroup
\setcoqmonofont
\begin{minted}{coq}
Fixpoint run_until_next_call_aux {T : choice_type} (prog : raw_code T) (fn : ident) (trace : trace_t) :
  raw_code suspended_program :=
  match prog with
  | ret v => ret (inr v, pack_trace trace)
  | opr o x k =>
    let '(f, _) := o in
    if f == fn then
      ret (inl (pickle x), pack_trace trace)
    else (
      y ← op o ⋅ x ;;
      run_until_next_call_aux (k y) fn (rcons trace (call_entry (pickle y)))
    )
\end{minted}
\endgroup
\caption{Scanning for the next call to \texttt{fn}: \texttt{inl} reports its
encoded input, while \texttt{inr} reports program termination.}
\label{fig:compiler-code}
\end{figure}

The outer driver answers a discovered call, records the answer, and repeats.
After the requested calls, trace replay reconstructs the remaining code, so
the complete transformation has the original result type.  The same-package
correctness result is \Cref{thm:compile-same}, whose Rocq theorem is
\texttt{compile\_calls\_correct}.  It proves equality of program semantics by
induction over traversal and trace replay.

The separate quantitative result is \Cref{thm:compile-replace}.  Its call
induction lifts the one-call Pythagorean judgment through the trace
transformation and accumulates \(q\) copies of its KL-cost tuple;
\texttt{compileRule} then performs the single final Micciancio--Walter
conversion.  The \textsc{Compile} rule in
\Cref{fig:security-rule-inventory} records its complete security-facing
premises.  Thus both semantic compiler correctness and adaptive quantitative
replacement are checked consequences of SSProve syntax and semantics.

\subsection{Probability formalization}

The probability development works over MathComp's countable discrete
distributions, rather than restricting the Gaussian argument to finite
support.  It defines KL finiteness as absolute continuity together with
summability of the KL integrand, proves Pinsker's inequality, and proves the
conditional-coordinate chain bound used in
\Cref{thm:pyth-probability}.  It also constructs completion and maximal
couplings, so that a distribution-level total-variation bound yields the
program-level proposition above.

The distributions used here, notably the discrete Gaussians, have countably
infinite support.  Consequently, the KL chain rule involves infinite series.
A paper proof may leave their convergence implicit; the formalization must
establish it before decomposing transcript KL into conditional KL costs.  It
also treats histories of probability zero explicitly.  These results yield the
chain bound used in the square-root composition argument.

For the one-call bound, the development also proves that the discrete Gaussian
is normalized and has full support, that its normalizer is invariant under
integer shifts, and that its centered first moment is zero.  These facts yield
the exact equal-variance KL identity in \Cref{eq:dg-kl} and discharge the
Gaussian sampling rule.  Further theorem statements and proof structure
appear in \Cref{sec:verified-kl-analysis}.

\subsection{The checked theorem}

At the application boundary, the functor defines the concrete reduction
\(\cB_{\cA,q}\) and makes the assumed IND-CPA bound and the statistical loss
explicit:

\begingroup
\setcoqmonofont
\begin{minted}{coq}
  Definition security_bound (A : nom_package) (max_queries : nat) :=
    let B := ind_cpa_reduction A max_queries in
    IndCpaSecurity.security_bound B +
      security_loss dim max_queries gaussian_width_multiplier.
\end{minted}
\endgroup

After proving that the constructed reduction has the IND-CPA adversary
interface and deriving the completed-output inequality, the development
exports the following theorem.

\begingroup
\setcoqmonofont
\begin{minted}{coq}
  Theorem is_secure (A : nom_package) max_queries :
    Package IndCpaDSim.IndCpadAdv_import IndCpaDSim.IndCpadAdv_export A ->
    IndCpadGame.winning_probability max_queries A <=
    security_bound A max_queries.
\end{minted}
\endgroup

The premise says that \(A\) is a well-typed package with the required IND-CPAD
adversary interface.  Unfolding \texttt{security\_bound} and
\texttt{security\_loss} gives exactly \Cref{thm:main}; in particular, the
right-hand term is the assumed IND-CPA winning-probability bound for the
constructed reduction plus \(\sqrt{qn}/(2\gamma)\).  This source statement also
makes clear where the theorem boundary lies: scheme structure, charts,
probability-one correctness, IND-CPA security, and positivity of \(\gamma\)
enter through the enclosing functor, whereas reduction validity, the
discrete-Gaussian calculation, compiler correctness, and logic soundness have
already been proved.

\subsection{Reproducibility}

\ifdefined\CameraReady
The complete artifact is available at
\url{https://github.com/ethanlee515/Mending}.
\fi
The complete artifact checks with Rocq 9.0.1, \texttt{rocq-ssprove}
development version at commit \texttt{c6d7d4bc3a}, MathComp Analysis 1.16.0,
and MathComp algebra tactics 1.2.7.  In a fresh switch on the authors' machine,
a clean build took about six minutes with four parallel jobs; this is a single
observation for reviewer planning, not a performance claim.  The manuscript's
Rocq listings are refreshed from the checked sources during its build, so the
displayed propositions and theorem statements do not silently diverge from
the artifact.

\subsection{Trusted computing base}
\label{sec:tcb}

The trusted base includes the Rocq kernel and standard library; MathComp,
MathComp Analysis, and their real-number/classical infrastructure; and
SSProve's program, heap, package, linking, and subdistribution semantics.
The scheme, chart, correctness, parameter, and IND-CPA interfaces from
\Cref{sec:construction} are theorem parameters and must be discharged by a
concrete instantiation.

The new KL, Gaussian, relational, compiler, and game-hop results contain no
local \texttt{Admitted}.  The installed dependency graph nevertheless reports
MathComp Analysis's
\texttt{\_\_admitted\_\_interchange\_psum}: current SSProve
distribution semantics are built on these MathComp definitions and reach the
upstream unproven lemma ~\cite{ssprove}.  This is an inherited assumption, not an
unproven axiom introduced by our security development.

We independently discharge the mathematical obligation.  The artifact proves
the same statement as \texttt{interchange\_psum\_proved};
\texttt{Print Assumptions} for this replacement reports only the usual
extensionality and choice principles and not the admitted upstream lemma.  The
proof has since been merged to MathComp Analysis's development branch,
\ifdefined\CameraReady
via \href{https://github.com/math-comp/analysis/pull/2007}%
{\texttt{math-comp/analysis\#2007}},
\fi
for inclusion in the 1.17.0 release.
Before this patch eventually reaches SSProve,
\texttt{Print Assumptions} for the top-level theorem
will continue to display the admitted lemma.

Besides library and classical assumptions, \texttt{Print Assumptions} for an
instantiated theorem reports the concrete scheme, chart, correctness, and
IND-CPA interfaces, together with positivity of the discrete-Gaussian width
multiplier \(\gamma\).  It does not report any program-logic rule, compiler
correctness theorem, or discrete-Gaussian KL identity: these are verified
inside the development.

Selected definitions of the scheme interface, games, transform,
and reduction remain in \Cref{app:formal-interface-excerpts}.  Together with
the source-level theorem above, they expose the small specification surface on
which a cryptographic audit should concentrate; the Rocq kernel checks that the
remaining proof terms connect that surface by the argument in
\Cref{sec:verified-reduction}.  As in SSProve and EasyCrypt, running time is not
part of the program semantics and must be inspected separately
\cite{easycrypt,ssprove}.  Here the
reduction runs the adversary once, forwards its oracle calls, and adds table
operations and discrete-Gaussian sampling.

\section{Related Work}
\label{sec:related-work}

\emph{Approximate-FHE attacks and defenses.} Li and Micciancio identified the secret-dependent approximation-error channel in CKKS-style decryption~\cite{lm}.
LMSS introduced the IND-CPAD formulation and used differential-privacy and noise-flooding techniques to recover security under explicit parameter conditions~\cite{lmss}.
Subsequent work has sharpened the practical precision tradeoff~\cite{costache2023precision} and revisited concrete noise-flooding parameters~\cite{bergamaschi2025revisiting}.
Attacks against exact FHE at aggressive parameters show that decryption-query security is not confined to approximate schemes~\cite{cheon2024attacks}.
Our contribution is neither a new attack nor a better concrete parameter set: it machine-checks an existing reduction and the square-root composition mechanism on which such parameters rely.

\emph{Mechanized game-based cryptography.} SSProve combines state-separating package algebra with a probabilistic relational program logic in Rocq and validates the framework on encryption, KEM--DEM, and sigma-protocol case studies~\cite{ssprove}.
Its central bridge lifts invariant-preserving exact judgments for individual procedures to perfect indistinguishability of packages.
Our result follows the same security-facing pattern at a different quantitative boundary: it lifts a conditional-KL judgment for one oracle implementation to additive indistinguishability of a complete adaptive program.
We retain SSProve's syntax, package algebra, heaps, linking, and subdistribution semantics, while adding judgments and a selected-call compiler.
Recent Nominal-SSProve work makes package state separation intrinsic and reports errors found in an ElGamal mechanization~\cite{nominal-ssprove}; a further development mechanizes query-counting, multi-instance, and nested hybrid arguments~\cite{nested-hybrids}.
Those results address exact package composition and general hybrid structure.
Our compiler instead preserves an adaptive sequence of conditional KL costs and derives a sublinear statistical-distance loss.

EasyCrypt is another popular framework for game-based cryptographic proofs.
Adversaries are treated as opaque modules, and reflection techniques \cite{reflection} can be used to extract the adversary's semantics.
Our abstraction boundary differs because SSProve adversaries are concrete program syntax: the verified trace compiler factors selected calls before the replacement theorem is applied.
It is not immediately clear whether EasyCrypt's reflection machinery suffices to formalize the FHE sub-linear security bound.
We leave this matter as another interesting open question.

\emph{Approximate relational logics.} Approximate relational Hoare logics and approximate liftings were developed largely for differential privacy~\cite{aprhl,aprhl-couplings}.
They support rich \((\varepsilon,\delta)\) reasoning, but their standard sequential rules add privacy or additive-error parameters.
One can instantiate general \(f\)-divergence machinery with KL divergence.
The issue here is where the nonlinear conversion occurs: applying Pinsker at each command boundary loses linearly.
Our semantic transcript judgment exposes conditional KL budgets to its sequence rule, concatenates them, and converts only at the final marginal.
The compiler then makes that rule applicable to adaptive oracle syntax.
The resulting logic is deliberately specialized to completed discrete semantics and the game-based additive conclusion used by noise flooding; it is not a general replacement for differential-privacy logics.

\section{Conclusion and Future Works}
\label{sec:limitations}

Our formalization deliberately factors correctness failures from the
quantitative noise-flooding argument.
Its probability-one hypotheses prove the good-execution case.
For an imperfectly correct FHE such as CKKS, one more standard up-to-bad game hop must be formalized before instantiation, and the probability of generating ``bad" key pairs and encryptions must be carefully accounted for.

The chart interface from \Cref{eq:charts} also remains abstract.
A concrete CKKS instantiation must connect its plaintext ring and norm to those
distance-and-translation laws.
Keeping that obligation separate makes the generic reduction easier to audit,
but does not discharge it.

Modulo the above proof engineering items, we have formally verified the central LMSS game hop \cite{lmss}.
That is, an IND-CPA and approximately correct FHE scheme can be made IND-CPAD through noise flooding.
The natural next step then, is to verify that the CKKS scheme is indeed both IND-CPA and approximately correct.
That will certainly come with its own set of other interesting challenges.
For instance, such an effort will no doubt contend with the use of randomized versus deterministic rounding, as well as the (difficult to formalize) heuristics involved in the deterministic case as found ``in the wilds".

\label{page:main-matter-end}

\section*{Acknowledgments}

Y.L., J.L., X.W. are partially supported by the National Science Foundation grant CCF-1942837 (CAREER), CCF-2330974, and a Sloan Research Fellowship. A.C. acknowledges support from the National Science Foundation grant CCF-1813814, from the AFOSR under Award Number FA9550-20-1-0108 and from the Quantum Advantage Pathfinder project.

OpenAI Codex was used to draft prose throughout this manuscript.
The Rocq development was also substantially AI-assisted.
Both have been audited and revised by the authors accordingly.

\bibliographystyle{IEEEtran} \bibliography{reference}

\appendix

\section{Security Games and Reduction Pseudocode}
\label{app:security-games}

This appendix gives the full security experiments used in the paper and
expands the reduction from \Cref{sec:verified-reduction} into
executable-style pseudocode.

\subsection{The IND-CPA experiment}
\label{app:ind-cpa-game}

IND-CPA is the standard confidentiality notion for public-key encryption.
We use its multi-query left-or-right formulation: the challenger samples one
hidden bit \(b\), and every encryption query selects its plaintext with that
same bit.
The evaluation key is public, so the adversary can evaluate ciphertexts
locally without a challenger oracle.

\begin{algorithm}
\caption{Multi-query left-or-right IND-CPA experiment}
\label{alg:ind-cpa-game}
\begin{algorithmic}
  \Procedure{Game}{$\cB$}
    \State $b \gets \{0,1\}$
    \State $(pk,evk,sk) \gets KeyGen()$
    \State $b^* \gets \cB^{OEnc_{\mathsf{CPA}}}(pk,evk)$
    \State \Return $(b^*=b)$
  \EndProcedure
  \State
  \Procedure{$OEnc_{\mathsf{CPA}}$}{$m_0,m_1$}
    \State $c \gets Enc(pk,m_b)$
    \State \Return $c$
  \EndProcedure
\end{algorithmic}
\end{algorithm}

The adversary may choose several encryption queries adaptively.
We state security in winning-probability form: the scheme's IND-CPA assumption
supplies a bound
\[
 \WinProb[\mathsf{IND\text{-}CPA}_{S}(\cB)]
 \leq \beta_{\mathsf{CPA}}(\cB)
\]
for every adversary \(\cB\) with the required oracle interface.
Equivalently, if advantage means excess winning probability over random
guessing, then
\(\beta_{\mathsf{CPA}}(\cB)=\tfrac12+\mathsf{Adv}_{\mathsf{CPA}}(\cB)\).

\subsection{The IND-CPAD experiment}
\label{app:ind-cpad-game}

Li and Micciancio introduced IND-CPA with decryption, or IND-CPAD, to model
attacks that exploit released approximate decryptions~\cite{lm}.
The game retains the hidden bit, encryption oracle, and final guess of
IND-CPA, and adds homomorphic evaluation and restricted decryption.
Its challenger maintains a table
\[
 T=[(m_0,m_1,c),\ldots].
\]
Each row records the actual ciphertext \(c\) together with the plaintext it
would represent in the left challenge world and in the right challenge world.
Only the ciphertext for the world selected by \(b\) is sampled.

\begin{algorithm}
\caption{IND-CPAD Challenger Skeleton}
\label{alg:ind-cpad-game}
Ingredients:
\begin{itemize}
  \item An approximate homomorphic encryption scheme
    $S=(KeyGen,Enc,Eval,Dec)$.
  \item A decrypt-query bound $q\in\bbN$.
\end{itemize}
Global challenger state:
\begin{itemize}
  \item keys $pk,evk,sk$;
  \item hidden challenge bit $b\in\{0,1\}$;
  \item table $T\in(M\times M\times C)^*$ of challenge plaintext pairs and
    produced ciphertexts;
  \item decrypt counter $d$.
\end{itemize}
\begin{algorithmic}
  \Procedure{Game}{$\cA$}
    \State $b \gets \{0,1\}$
    \State $(pk,evk,sk) \gets KeyGen()$
    \State $T \gets [\,]$
    \State $d \gets 0$
    \State $b^* \gets \cA^{OEnc,OEval,ODec}(pk,evk)$
    \State \Return $(b^*=b)$
  \EndProcedure
\end{algorithmic}
\end{algorithm}

The encryption oracle appends a new row.
The evaluation oracle applies the same gate separately to the recorded
left-world and right-world plaintexts, then stores both results alongside the
evaluated ciphertext.
The decryption oracle returns a plaintext only if the two recorded plaintexts
agree.
This restriction is what makes decryption compatible with a left-or-right
game: if the two plaintexts differed, releasing the decryption could reveal
the hidden bit immediately.

\begin{algorithm}
\caption{IND-CPAD Oracles}
\label{alg:ind-cpad-oracles}
The oracles share the challenger state from \Cref{alg:ind-cpad-game}.
\begin{algorithmic}
  \Procedure{OEnc}{$m_0,m_1$}
    \State $c \gets Enc(pk,m_b)$
    \State $T \gets T \doubleplus [(m_0,m_1,c)]$
    \State \Return $c$
  \EndProcedure
  \State
  \Procedure{OEval}{$f,\mathcal{I}$}
    \State $\vec r \gets (T_i)_{i\in\mathcal{I}}$
    \State $m_0' \gets f((r.m_0)_{r\in\vec r})$
    \State $m_1' \gets f((r.m_1)_{r\in\vec r})$
    \State $c' \gets Eval(f,(r.c)_{r\in\vec r})$
    \State $T \gets T \doubleplus [(m_0',m_1',c')]$
    \State \Return $c'$
  \EndProcedure
  \State
  \Procedure{ODec}{$i$}
    \State \textbf{assert} $d<q$
    \State $d \gets d+1$
    \State \textbf{assert} $i<|T|$
    \If{$T_i.m_0\ne T_i.m_1$}
      \State \Return $\mathsf{None}$
    \Else
      \State $y\gets\mathsf{SampleCurrent}(T_i)$
      \State \Return $\mathsf{Some}(y)$
    \EndIf
  \EndProcedure
\end{algorithmic}
\end{algorithm}

The adversary may interleave encryption, evaluation, and decryption calls
adaptively.
Calls beyond the \(q\)-query decryption budget, out-of-range table indices,
and other failed assertions abort that execution.
In SSProve, an aborted execution contributes no probability to winning.

Here \(\mathsf{SampleCurrent}\) is instantiated by the surrounding game.
For a row whose two recorded plaintexts agree,
\(T_i=(m,m,\mathsf{Some}(\bar c,e))\), the real game uses
\(\mathsf{Flood}_e(\mathsf{dec}_0(sk,T_i.c))\), whereas the
simulated-decryption game uses \(\mathsf{Flood}_e(m)\), exactly as in
\Cref{eq:real-sim-decrypt}.
The formal proof factors the common prefix of \textsc{ODec} from this final
sampling continuation.

\subsection{Reduction adversary}

\begin{algorithm}
\caption{IND-CPA reduction adversary $\mathcal{B}_{\cA,q}$}
\label{alg:ind-cpa-reduction}
This is the conventional reduction endpoint used in the main proof.
It stores the same logical table as the IND-CPAD challenger, but obtains
challenge encryptions from the outer IND-CPA encryption oracle.
\begin{algorithmic}[1]
  \Procedure{$\mathsf{Flood}$}{$m,e$}
    \State $\sigma_e\gets\max\{1,e\}\gamma$
    \State $\eta \gets D_{\bbZ^n,0,\sigma_e^2}$
    \State \Return $J_m(\eta)$
  \EndProcedure
  \Statex
  \Procedure{$\mathcal{B}_{\cA,q}$}{$pk,evk$}
    \State $T \gets [\,]$
    \State $d \gets 0$
    \State $b^* \gets
      \cA^{OEnc_{\mathcal{B}},OEval_{\mathcal{B}},ODec_{\mathcal{B}}}(pk,evk)$
    \State \Return $b^*$
  \EndProcedure
  \Statex
  \Procedure{$OEnc_{\mathcal{B}}$}{$m_0,m_1$}
    \State $c \gets OEnc_{\mathsf{CPA}}(m_0,m_1)$
      \Comment{outer IND-CPA encryption oracle}
    \State $T \gets T \doubleplus [(m_0,m_1,c)]$
    \State \Return $c$
  \EndProcedure
  \Statex
  \Procedure{$OEval_{\mathcal{B}}$}{$f,\mathcal{I}$}
    \State $\vec r \gets (T_i)_{i\in\mathcal{I}}$
    \State $m_0' \gets f((r.m_0)_{r\in\vec r})$
    \State $m_1' \gets f((r.m_1)_{r\in\vec r})$
    \State $c' \gets Eval(f,(r.c)_{r\in\vec r})$
    \State $T \gets T \doubleplus [(m_0',m_1',c')]$
    \State \Return $c'$
  \EndProcedure
  \Statex
  \Procedure{$ODec_{\mathcal{B}}$}{$i$}
    \State \textbf{assert} $d<q$
    \State $d \gets d+1$
    \State \textbf{assert} $i<|T|$
    \State $(m_0,m_1,c) \gets T_i$
    \If{$m_0\ne m_1$}
      \State \Return $\mathsf{None}$
    \EndIf
    \State \textbf{assert} $c=\mathsf{Some}(\_,e)$
    \State $y \gets \mathsf{Flood}(m_0,e)$
    \State \Return $\mathsf{Some}(y)$
  \EndProcedure
\end{algorithmic}
\end{algorithm}

\FloatBarrier

The reduction does not know the IND-CPA challenge bit.
This is harmless for decryption queries that return a value: the simulator
only returns a flooded plaintext when the two recorded plaintexts agree,
$m_0=m_1$, so there is no bit-dependent plaintext choice left to make.

\section{SSProve Syntax and Semantics}
\label[appendix]{app:ssprove-substrate}

This appendix recalls the fragment of SSProve on which the program logic is
built~\cite{ssprove}.
SSProve embeds pure expressions shallowly in Rocq and
represents effects by a free monad.
Fix a result type \(A\).
Its raw code has
the following constructors:
\begin{equation}
\label{eq:ssprove-syntax}
\begin{split}
c ::= {}& \mathsf{return}(a)
 \mid \mathsf{call}(p,x,\kappa)
 \mid \mathsf{get}(\ell,\kappa)\\
 &{}\mid \mathsf{put}(\ell,v,c)
 \mid \mathsf{sample}(D,\kappa).
\end{split}
\end{equation}
Here \(a:A\).
An operation signature \(p:S\to T\) contains a procedure
identifier as well as its argument and result types; a call has \(x:S\) and
\(\kappa:T\to\mathsf{RawCode}(A)\).
A typed location
\(\ell:T_\ell\) is read into a continuation
\(\kappa:T_\ell\to\mathsf{RawCode}(A)\), or updated with \(v:T_\ell\).
Finally, sampling takes \(D\in\mathsf{SD}(X)\) and
\(\kappa:X\to\mathsf{RawCode}(A)\), where \(\mathsf{SD}(X)\) denotes discrete
subdistributions on \(X\).
Continuations are Rocq functions, so conditionals,
pure local computation, and structurally terminating loops are supplied by
the ambient language rather than by additional constructors.
Monadic bind
recursively pushes a continuation through this syntax.
We use the customary
notations
\begin{equation}
\label{eq:ssprove-notation}
\begin{aligned}
 x\leftarrow c_1;c_2
   &\equiv\mathsf{bind}(c_1,\lambda x.c_2),\\
 x\leftarrow p(a);c
   &\equiv\mathsf{call}(p,a,\lambda x.c),\\
 x\leftarrow\mathsf{get}\,\ell;c
   &\equiv\mathsf{get}(\ell,\lambda x.c),\\
 \mathsf{put}\,\ell:=v;c
   &\equiv\mathsf{put}(\ell,v,c),\\
 x\leftarrow D;c
   &\equiv\mathsf{sample}(D,\lambda x.c).
\end{aligned}
\end{equation}

An interface is a finite set of typed operation signatures.
The checked type
\(\mathsf{Code}_{L,I}(A)\) pairs raw code with a proof that every accessed
location belongs to the footprint \(L\) and every external call belongs to
the import interface \(I\).
A package \(P:I\to E\) is a finite map from the
procedure names in its export interface \(E\) to well-scoped implementations
\(S\to\mathsf{Code}_{L,I}(T)\).
Sequential composition \(P\circ Q\)
links \(P\)'s calls to procedures implemented by \(Q\).
At code level, the
essential clause is
\begin{equation}
\label{eq:ssprove-link}
\begin{aligned}
 &\mathsf{codeLink}(\mathsf{call}(p,a,\kappa),Q)\\
 &\qquad =
 x\leftarrow Q.p(a);
 \mathsf{codeLink}(\kappa(x),Q),
\end{aligned}
\end{equation}
with \(\mathsf{codeLink}\) acting recursively on the remaining constructors.
Parallel composition combines disjoint export interfaces, while the identity
package forwards every call.
Package validity tracks the import, export, and
memory-footprint side conditions used by these operations.

Each location carries a type and a default initial value.
A heap \(h\) maps
locations to values of their declared types; lookup of an unset location
returns its default.
After linking has eliminated external calls, fully linked code denotes a
state-transforming subdistribution
\begin{equation}
\label{eq:ssprove-denotation}
 \semantic{c}:\mathsf{Heap}\longrightarrow
  \mathsf{SD}(A\times\mathsf{Heap}).
\end{equation}
Writing \(\delta_z\) for the point distribution at \(z\), its defining
equations are
\begin{align}
 \semantic{\mathsf{return}(a)}_h
   &=\delta_{(a,h)},\notag\\
 \semantic{\mathsf{get}(\ell,\kappa)}_h
   &=\semantic{\kappa(h(\ell))}_h,\notag\\
 \semantic{\mathsf{put}(\ell,v,c)}_h
   &=\semantic{c}_{h[\ell\mapsto v]},\notag\\
 \semantic{\mathsf{sample}(D,\kappa)}_h
   &=\sum_{x}D(x)\,\semantic{\kappa(x)}_h.
\label{eq:ssprove-effect-semantics}
\end{align}
The sums are pointwise sums of subdistributions.
In particular, semantic
bind threads both the returned value and the updated heap:
\begin{align}
 \semantic{x\leftarrow c;\kappa(x)}_h
 &=
 \sum_{(a,h')}
   \semantic{c}_h(a,h')\,
   \semantic{\kappa(a)}_{h'}.
\label{eq:ssprove-bind-semantics}
\end{align}
SSProve defines failure by sampling the null subdistribution, and
\(\mathsf{assert}(b)\) as \(\mathsf{return}(())\) when \(b\) holds and failure
otherwise.
Thus failed assertions have the zero subdistribution and their
continuations are not run.
A fully linked game, with no unresolved oracle calls, is evaluated by resolving
its \(\mathsf{Run}\) procedure, starting from the default heap, and projecting
the result component of \Cref{eq:ssprove-denotation}.

\section{Checked Rules Used by the Security Proof}
\label{app:checked-rules}

\Cref{fig:security-rule-inventory} gives the complete program-logic interface
used directly by the proofs under \texttt{theories/Security}.
This interface consists of the exported lemmas in
\texttt{theories/ProgramLogics} whose names end in \texttt{Rule} and that the
security development invokes.
The figure remains schematic: it suppresses SSProve type indices and writes
\(\mu^c_{x,h}=\semantic{c(x)}_h\), \(\mathsf{wt}(\mu)\) for distribution
weight, and \(\mathsf{Eq}\) for equality of completed output--heap pairs
(including failure).
We use \(E_\#D\) for the pushforward of \(D\) by \(E\), and
\(\supp(\mu)\subseteq Q\) to mean that \(Q\) holds throughout the support.
The assertion \(M^{=}\) requires equal values and heaps satisfying \(M\), as
in \Cref{fig:logic-rules}; \(Q^{=}\) similarly requires \(Q\) on the left
output--heap pair and equality of both pairs.
All displayed side conditions are pointwise over inputs satisfying the
precondition.

\smallskip\noindent\emph{The raw additive-error judgment.}
This is a convenience wrapper around the completed judgment, not a second
coupling semantics.
For a postcondition \(Q\) on ordinary output--heap pairs, define
\[
 \begin{aligned}
 \widehat Q(\mathsf{Some}(z_L),\mathsf{Some}(z_R))
   &=Q(z_L,z_R),\\
 \widehat Q(\widehat z_L,\widehat z_R)
   &=\mathsf{false}\quad\text{otherwise}.
 \end{aligned}
\]
Then the judgment
\[
 \vDash\AEJ_{\rm raw}\{P\}\ c_L\approx_\epsilon c_R\ \{Q\}
\]
is defined to mean
\[
 \vDash\AEJ\{P\}\ c_L\approx_\epsilon c_R\ \{\widehat Q\}.
\]
Thus any failure outcome is bad, and the good event supplies concrete
intermediate values and heaps to a continuation.
The security proof uses this judgment only for the lossless, zero-error
challenge-initialization prefix: \textsc{AE-Raw-TVD} establishes its invariant,
\textsc{AE-Raw-Conseq} reshapes that invariant, and \textsc{AE-Seq} immediately
composes the result with the completed additive-error judgment for the rest of
the game.

This is an interface-level inventory, rather than a list of every helper
lemma in the compiler proof.
Several displayed rules are derived from the core rules in
\Cref{fig:logic-rules}: \textsc{DG-Vector} uses \textsc{KL-Sample}; the two
mixed rules use \textsc{Pyth-Seq} and \textsc{Pyth-Refl}; and
\textsc{Compile} uses \textsc{Pyth-Seq} throughout the call induction before
one final \textsc{Micciancio--Walter} conversion.

\begin{figure*}[!t]
\centering
\scriptsize

\textit{Unary Hoare and additive-error rules}
\begin{mathpar}
\inferrule*[right=\textsc{Hoare-Ret}]
  {P(x,h)\Longrightarrow Q(f(x),h)}
  {\vDash\mathsf{Hoare}\{P\}\ \mathsf{return}(f(x))\ \{Q\}}

\inferrule*[right=\textsc{AE-TVD}]
  {0\leq\epsilon \\
   \TVD(\completed{\mu^{c_L}_{x_L,h_L}},
        \completed{\mu^{c_R}_{x_R,h_R}})\leq\epsilon}
  {\vDash\AEJ\{P\}\ c_L\approx_\epsilon c_R\ \{\mathsf{Eq}\}}

\inferrule*[right=\textsc{AE-Raw-TVD}]
  {0\leq\epsilon \\
   \mathsf{wt}(\mu^{c_L}_{x_L,h_L})=
   \mathsf{wt}(\mu^{c_R}_{x_R,h_R})=1 \\
   \TVD(\mu^{c_L}_{x_L,h_L},\mu^{c_R}_{x_R,h_R})\leq\epsilon \\
   \supp(\mu^{c_L}_{x_L,h_L})\subseteq Q}
  {\vDash\AEJ_{\rm raw}\{P\}\ c_L\approx_\epsilon c_R\ \{Q^{=}\}}

\inferrule*[right=\textsc{AE-Conseq}]
  {P'\Longrightarrow P \\ Q\Longrightarrow Q' \\ \epsilon\leq\epsilon' \\
   \vDash\AEJ\{P\}\ c_L\approx_\epsilon c_R\ \{Q\}}
  {\vDash\AEJ\{P'\}\ c_L\approx_{\epsilon'}c_R\ \{Q'\}}

\inferrule*[right=\textsc{AE-Raw-Conseq}]
  {P'\Longrightarrow P \\ Q\Longrightarrow Q' \\ \epsilon\leq\epsilon' \\
   \vDash\AEJ_{\rm raw}\{P\}\ c_L\approx_\epsilon c_R\ \{Q\}}
  {\vDash\AEJ_{\rm raw}\{P'\}\ c_L\approx_{\epsilon'}c_R\ \{Q'\}}

\inferrule*[right=\textsc{AE-Triangle}]
  {P\Longrightarrow x_L=x_R\wedge h_L=h_R \\
   \vDash\AEJ\{P\}\ c_L\approx_\epsilon c_M\ \{\mathsf{Eq}\} \\
   \vDash\AEJ\{P\}\ c_M\approx_{\epsilon'}c_R\ \{\mathsf{Eq}\}}
  {\vDash\AEJ\{P\}\ c_L\approx_{\epsilon+\epsilon'}c_R\ \{\mathsf{Eq}\}}

\inferrule*[right=\textsc{AE-Seq}]
  {\vDash\AEJ_{\rm raw}\{P\}\ c_L\approx_\epsilon c_R\ \{M\} \\
   \vDash\AEJ\{M\}\ k_L\approx_{\epsilon'}k_R\ \{Q\}}
  {\vDash\AEJ\{P\}\
    (y\leftarrow c_L;k_L(y))
    \approx_{\epsilon+\epsilon'}
    (y\leftarrow c_R;k_R(y))\ \{Q\}}
\end{mathpar}

\textit{Pythagorean and derived sampling rules}
\begin{mathpar}
\inferrule*[right=\textsc{DG-Vector}]
  {E\ \mathsf{injective} \\ 0\leq\epsilon \\ 0<\sigma \\
   \displaystyle\frac{(v_i-u_i)^2}{2\sigma^2}\leq\epsilon
      \quad\text{for every }i \\
   P\Longrightarrow h_L=h_R \\
   P\Longrightarrow Q(y,h_L)\wedge Q(y,h_R)\quad\text{for every }y}
  {\vDash\Pyth\{P\}\
    z\leftarrow E_\#\mathsf{DG}^{n}(u,\sigma^2)
    \approx_{[n\epsilon]}
    z\leftarrow E_\#\mathsf{DG}^{n}(v,\sigma^2)\ \{Q\}}

\inferrule*[right=\textsc{Pyth-Refl}]
  {P\Longrightarrow x_L=x_R\wedge h_L=h_R \\
   0\leq s_i\quad\text{for every }i \\
   \supp(\mu^c_{x_L,h_L})\subseteq Q}
  {\vDash\Pyth\{P\}\ c\approx_s c\ \{Q\}}

\inferrule*[right=\textsc{Shared-Continuation}]
  {\vDash\Pyth\{P\}\ c_L\approx_s c_R\ \{M\} \\
   \vDash\mathsf{Hoare}\{M\}\ k\ \{Q\}}
  {\vDash\Pyth\{P\}\
    (y\leftarrow c_L;k(y))
    \approx_{s\doubleplus[0]}
    (y\leftarrow c_R;k(y))\ \{Q\}}

\inferrule*[right=\textsc{Shared-Prefix}]
  {P\Longrightarrow x_L=x_R\wedge h_L=h_R\wedge P_0(x_L,h_L) \\
   \vDash\mathsf{Hoare}\{P_0\}\ c\ \{M\} \\
   \vDash\Pyth\{M^{=}\}\ k_L\approx_s k_R\ \{Q\}}
  {\vDash\Pyth\{P\}\
    (y\leftarrow c;k_L(y))
    \approx_{[0]\doubleplus s}
    (y\leftarrow c;k_R(y))\ \{Q\}}
\end{mathpar}

\textit{Adaptive-call compiler rule}
\begin{mathpar}
\inferrule*[right=\textsc{Compile}]
  {\mathsf{ValidSeparatePreserving}(A,p,P,P',I) \\
   \vDash\Pyth\{I^{=}\}\ P.p\approx_s P'.p\ \{I\}}
  {\vDash\AEJ\{I^{=}\}\
   \mathsf{Compile}_q(A,p;P,P)
   \approx_{\sqrt{q\norm{s}_1/2}}
   \mathsf{Compile}_q(A,p;P',P)\ \{\mathsf{Eq}\}}
\end{mathpar}

\caption{Complete inventory of checked program-logic rules invoked directly
by the security development.
The labels correspond respectively to the Rocq theorems
\texttt{hoareRetRule}, \texttt{additiveErrorSameOutputTvdEqRule},
\texttt{additiveErrorRawTvdEqPostTotalRule},
\texttt{additiveErrorConseqRule}, \texttt{additiveErrorRawConseqRule},
\texttt{additiveErrorSameOutputTriangleRule},
\texttt{additiveErrorSeqRule}, \texttt{klDgNRule},
\texttt{pythReflRule}, \texttt{pythAeSeqRule},
\texttt{pythHoareSeqRule}, and \texttt{compileRule}.
Typing, support, and footprint premises are summarized as in
\Cref{fig:logic-rules}; \(\mathsf{ValidSeparatePreserving}\) groups the
compiler's validity, memory-separation, invariant-dependence, and
invariant-preservation premises.}
\label{fig:security-rule-inventory}
\end{figure*}
\FloatBarrier

\section{Additional Probability Details}
\label{sec:verified-kl-analysis}

This appendix records the distribution-level results behind the program rules
used in \Cref{sec:pyth-toolkit}.
It complements that section's program-level account: here we state the
probability facts available to the logic, while \Cref{sec:pyth-toolkit} explains
how SSProve programs consume them.

\subsection{Kullback-Leibler Infrastructure}

For MathComp distributions, the formalization defines KL divergence, absolute
continuity, and a predicate \(\mathsf{finiteKL}(\cP,\cQ)\) combining absolute
continuity with summability of the KL integrand.
The summability component matters in Rocq because many of the distributions in the development are countable rather than finite.
The core library also proves the log-sum inequality, KL preservation under injective pushforwards, and the summability facts used by the chain-rule proof.

\subsection{Pinsker and Pythagorean Preservation}

The formalization proves Pinsker's inequality for MathComp distributions.
\begin{lemma}[Pinsker's inequality]
    For all full distributions $\cP$ and $\cQ$ satisfying $\mathsf{finiteKL}(\cP,\cQ)$, we have
    \begin{equation}
        \Delta(\cP,\cQ)
  \leq
  \sqrt{\frac{\KL{\cP}{\cQ}}{2}}.
    \end{equation}
\end{lemma}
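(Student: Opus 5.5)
The plan is to reduce Pinsker's inequality to the two-point (Bernoulli) case via the data-processing inequality for KL, and then prove the two-point case by elementary calculus. Let \(A=\{x : \cP(x)>\cQ(x)\}\). Since both distributions have weight one, \(\Delta(\cP,\cQ)=\cP(A)-\cQ(A)\). Set \(p=\cP(A)\) and \(r=\cQ(A)\), and let \(\cP',\cQ'\) be the Bernoulli distributions obtained by pushing \(\cP,\cQ\) forward along the indicator of \(A\). Then \(\Delta(\cP,\cQ)=\Delta(\cP',\cQ')=p-r\).

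It therefore suffices to prove two inequalities:
\[
\KL{\cP'}{\cQ'}\leq\KL{\cP}{\cQ}
\quad\text{and}\quad
2(p-r)^2\leq p\ln\tfrac pr+(1-p)\ln\tfrac{1-p}{1-r}.
\]

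\emph{Data processing.} For the first inequality I will apply the log-sum inequality, which the core library already provides, separately on \(A\) and on its complement. The sum \(\sum_{x\in A}\cP(x)\ln(\cP(x)/\cQ(x))\) is bounded below by \(p\ln(p/r)\), and likewise on the complement. The \(\mathsf{finiteKL}\) hypothesis is used in two ways:
\begin{itemize}
\item absolute continuity guarantees \(r=0\Rightarrow p=0\) (and similarly for the complements), so the Bernoulli KL is well defined and finite;
\item summability of the integrand justifies splitting the countable sum defining \(\KL{\cP}{\cQ}\) into the two pieces over \(A\) and its complement, and applying the countable log-sum inequality to each.
\end{itemize}

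\emph{Two-point inequality.} Fix \(p\in[0,1]\) and consider
\[
g(r)=p\ln\tfrac pr+(1-p)\ln\tfrac{1-p}{1-r}-2(p-r)^2
\]
for \(r\in(0,1)\). We have \(g(p)=0\), and
\[
g'(r)=(r-p)\left(\frac1{r(1-r)}-4\right).
\]
Since \(r(1-r)\le 1/4\), the second factor is nonnegative, so \(g'\) has the sign of \(r-p\). Hence \(g\) decreases up to \(p\) and increases after it, giving \(g\ge0\). The endpoint cases \(r\in\{0,1\}\) reduce, via absolute continuity, to \(p=r\), where the inequality is trivial. The conventions \(0\ln 0=0\) must be handled at \(p\in\{0,1\}\).

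Combining the two inequalities gives \(2\Delta(\cP,\cQ)^2\leq\KL{\cP}{\cQ}\), and taking square roots (both sides are nonnegative) yields the claimed bound.

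The main obstacle will be the data-processing step in the countable setting. It requires splitting a possibly conditionally summable series whose terms can be negative, and applying the log-sum inequality to infinite sums. This is exactly where the positive and negative parts of the integrand must be shown summable separately. The monotonicity argument is routine by comparison, though the case split on degenerate endpoints is tedious.
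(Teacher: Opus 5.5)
Your proposal is correct and follows essentially the same route as the paper: reduce total variation to the binary event \(\{x:\cP(x)>\cQ(x)\}\), use the log-sum inequality as the data-processing step onto that event, and prove the binary quadratic Pinsker inequality by real analysis. Two small remarks. First, \(\mathsf{finiteKL}\) already gives absolute summability of the integrand, so splitting over \(A\) and its complement is immediate and no conditional-convergence issue arises. Second, the only delicate calculus case is \(p=1\) with \(r\in(0,1)\): there \(g(p)\) lies outside the open interval, so you need continuity at \(r\to1^-\) to conclude \(g\geq0\).
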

The proof reduces total variation to a binary event, proves the binary quadratic Pinsker inequality by real analysis, and combines it with the data-processing/log-sum argument for that event.

We also prove the chain-rule for the conditional-coordinate KL bound.
This is the result used by Pythagorean probability preservation.
\begin{lemma}
Let $\cP$ and $\cQ$ be distributions over $n$-tuples.
Assume
\begin{itemize}
    \item The distributions are lossless: $dweight(\cP)=dweight(\cQ)=1$.
    \item The relevant KL divergences are all finite:
    $\forall i,a.\ 
    \mathsf{finiteKL}(\cP_i\mid a,\cQ_i\mid a)$
    \item For all coordinate $i$ and history $a$, we have the following bound:
    $$\KL{\cP_i\mid a}{\cQ_i\mid a}\leq \varepsilon_i$$
\end{itemize}
We then have
\[
  \KL{\cP}{\cQ}\leq \sum_i \varepsilon_i .
\]
\end{lemma}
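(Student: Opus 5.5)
The plan is to prove the chain bound by induction on the number of coordinates \(n\), using the standard decomposition of joint mass into prefix mass times conditional mass. For \(x=(a,x_n)\in\Omega^n\) with \(a\in\Omega^{n-1}\), write \(\cP(x)=\cP_{<n}(a)\,(\cP_n\mid a)(x_n)\), and similarly for \(\cQ\). Here \(\cP_{<n}\) is the prefix marginal. On prefixes of \(\cP\)-mass zero, the conditional is taken to be an arbitrary fixed full distribution, so that the hypotheses remain meaningful.

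The first step is absolute continuity. If \(\cQ(x)=0\) and \(\cP(x)>0\), then either \(\cQ_{<n}(a)=0\), which contradicts the inductive absolute continuity of the prefix, or \((\cQ_n\mid a)(x_n)=0\), which contradicts \(\mathsf{finiteKL}(\cP_n\mid a,\cQ_n\mid a)\). So on the support of \(\cP\),
\[
 \log\frac{\cP(x)}{\cQ(x)}=\log\frac{\cP_{<n}(a)}{\cQ_{<n}(a)}+\log\frac{(\cP_n\mid a)(x_n)}{(\cQ_n\mid a)(x_n)}.
\]
Multiplying by \(\cP(x)\) and summing, the first term marginalizes to \(\KL{\cP_{<n}}{\cQ_{<n}}\). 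The second term becomes \(\sum_a\cP_{<n}(a)\,\KL{\cP_n\mid a}{\cQ_n\mid a}\), which is at most \(\varepsilon_n\sum_a\cP_{<n}(a)=\varepsilon_n\) because \(\cP\) has weight one. Applying the induction hypothesis to the prefix marginals gives \(\KL{\cP}{\cQ}\leq\sum_i\varepsilon_i\). This requires checking that the prefix marginals inherit full mass and the same conditional bounds for \(i<n\); they do, since the conditionals of a prefix marginal coincide with those of the full distribution. The base case \(n=1\) is the hypothesis itself, with an empty prefix.

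The main obstacle is justifying the summation step over countably infinite supports. The integrand \(\cP(x)\log(\cP(x)/\cQ(x))\) is not sign-definite, so splitting the sum and applying Fubini requires absolute summability of each piece. I would handle this as follows:
\begin{itemize}
\item For the prefix term, first show that \(\sum_x\cP(x)\,|\log(\cP_{<n}(a)/\cQ_{<n}(a))|\) is summable. It equals the corresponding prefix sum, and that sum is finite by the inductive \(\mathsf{finiteKL}\) for the prefix, which I carry as part of the induction statement.
\item For the conditional term, use the decomposition \(t\log t = (t\log t - t + 1) + (t-1)\). The first part is nonnegative, so Tonelli applies to it. The second part is absolutely summable with total mass at most \(2\).
\item From these two bounds, derive absolute summability of each conditional term weighted by \(\cP_{<n}(a)\), with total bounded by \(\varepsilon_n + 2\).
\end{itemize}
Once both pieces are absolutely summable, the interchange of the finite coordinate sum with the countable support sums is legitimate, and the joint \(\mathsf{finiteKL}(\cP,\cQ)\) follows as a byproduct.
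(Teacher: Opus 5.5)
Your proof is correct and uses the same ingredients as the paper's: the factorization of joint mass as prefix mass times conditional mass, an expectation of conditional KL costs bounded by $\varepsilon_i$, and a per-coordinate negative-part slack (your $t-1$ term) that secures absolute summability over countable supports before sums are exchanged. The difference is organizational. You induct on $n$ by peeling off the last coordinate, which obliges you to carry $\mathsf{finiteKL}$ for the prefix marginals through the induction and to check that their conditionals agree with those of $\cP$ and $\cQ$; the paper instead telescopes over all coordinates at once, groups traces by prefix and current value with the suffix summed away, and proves the value bound and summability in parallel.
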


The proof is the technical heart of the probability development: it decomposes the KL integrand pointwise over tuple prefixes, controls both positive and negative parts of the resulting series, and then reassembles the bound.
The value-level inequality and the summability obligation are proved in parallel.
Both use the same grouping step: for each coordinate, traces are grouped by their prefix and current coordinate value, while the suffix is summed away.
For the value theorem this produces an expectation of conditional KL divergences; for summability the same grouping is applied to finite positive-part sums, with one negative-part slack term per coordinate.

Combining this chain bound with Pinsker gives the main preservation theorem.
\begin{lemma}[Pythagorean probability preservation]
\label{lem:verified-pythagorean-preservation}
Let $\cP,\cQ\in\distr(A^n)$ be distributions of weight one, and let
$s=(s_1,\ldots,s_n)$ be a tuple of nonnegative real numbers.
If, for every coordinate $i$ and every prefix $a\in A^{i-1}$,
\begin{align*}
  &\mathsf{finiteKL}(\cP_i\mid a,\cQ_i\mid a),\\
  &\KL{\cP_i\mid a}{\cQ_i\mid a}\leq s_i,
\end{align*}
then
\[
  \Delta(\cP,\cQ)
  \leq
  \sqrt{\frac{\sum_i s_i}{2}}.
\]
\end{lemma}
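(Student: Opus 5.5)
The plan is to compose the two lemmas stated just above. First I apply the chain-rule lemma to \(\cP,\cQ\in\distr(A^n)\), which gives \(\KL{\cP}{\cQ}\leq\sum_i s_i\). Then I apply Pinsker's inequality to the joint distributions \(\cP\) and \(\cQ\). Pinsker gives \(\TVD(\cP,\cQ)\leq\sqrt{\KL{\cP}{\cQ}/2}\), and since the square root is monotone on \([0,\infty)\) this is at most \(\sqrt{\sum_i s_i/2}\). The hypotheses of the chain-rule lemma are exactly weight one, conditional finiteness, and the coordinatewise bounds, all of which are assumed in \Cref{lem:verified-pythagorean-preservation}. I also use \(s_i\ge 0\) so that the sum is nonnegative and the square root is well defined.

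The one nontrivial obligation is the hypothesis \(\mathsf{finiteKL}(\cP,\cQ)\) required by Pinsker. This consists of absolute continuity of \(\cP\) with respect to \(\cQ\) together with summability of the integrand \(\cP(x)\log(\cP(x)/\cQ(x))\) over the countable space \(A^n\). I would establish it alongside the chain bound rather than after it. As the appendix describes, the summability argument runs in parallel with the value inequality, using the same prefix grouping.

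Absolute continuity I would prove directly. Factor \(\cP(a_1,\dots,a_n)=\prod_i \cP_i(a_i\mid a_{<i})\) and the same for \(\cQ\). If \(\cQ(a)=0\), then either some prefix of \(a\) has \(\cQ\)-mass zero, or some conditional factor \(\cQ_i(a_i\mid a_{<i})\) vanishes at a prefix of positive \(\cQ\)-mass. In the second case, conditional absolute continuity forces the matching \(\cP\) factor to vanish. The first case is handled by induction on the length of the prefix. The only delicate point is zero-mass prefixes. There the conditional is fixed by convention, and the hypothesis is assumed for every prefix \(a\), so the induction goes through regardless of that convention.

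For summability I would bound the positive part of the log-ratio by the sum of positive parts of the per-coordinate log-ratios. Summing over suffixes, each of these reduces to a \(\cP\)-expectation over prefixes of the conditional positive-part series. That series is finite by conditional \(\mathsf{finiteKL}\), and it is uniformly bounded by \(s_i\) plus a negative-part slack term. The negative-part slack satisfies \(\sum_x \cP(x)\,[\log(\cQ(x)/\cP(x))]_+\le \sum_x \cQ(x)\le 1\), so it is bounded by one per coordinate. This bookkeeping of positive and negative parts over countable sums is the main obstacle. Once it is in place, the combination with Pinsker is immediate.
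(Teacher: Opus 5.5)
Your proposal is correct and follows the paper's own route. The chain-rule lemma gives \(\KL{\cP}{\cQ}\leq\sum_i s_i\), Pinsker is then applied once to the joint transcripts, and the joint \(\mathsf{finiteKL}\) obligation is discharged alongside the value bound using the same prefix/coordinate grouping, positive-part bounds, and one negative-part slack term per coordinate that the paper describes, with your prefix induction for absolute continuity filling in a detail the paper leaves to the artifact.
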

The same part of the formalization packages this lemma into a reusable Pythagorean distribution predicate, proves reflexivity and singleton-trace lemmas, and defines the call-error tuples used by the trace-compiler theorem.

\subsection{Discrete Gaussian Facts}

The discrete-Gaussian development is now one ingredient of the KL story rather than the whole story.
It proves the equal-variance KL formula used in \Cref{eq:dg-kl}:
\begin{lemma}[KL Divergence between discrete Gaussians]
\label{lem:dg-kl}
Let $\sigma\in\bbR$ be positive.
For all $\mu, \nu\in\bbZ$, we have
\[
  \KL{\DG{\mu}{\sigma^2}}{\DG{\nu}{\sigma^2}}
  =
  \frac{(\nu-\mu)^2}{2\sigma^2}.
\]
\end{lemma}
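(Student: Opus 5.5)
We compute the divergence directly from the definition, using the facts about \(\DG{\mu}{\sigma^2}\) already established in the development: normalization, shift-invariance of the normalizer, full support, summability of the relevant series, and the vanishing centered first moment. Write \(Z_\mu=\sum_{z\in\bbZ}\exp(-(z-\mu)^2/(2\sigma^2))\). Because \(\mu,\nu\in\bbZ\), reindexing \(z\mapsto z+(\nu-\mu)\) gives \(Z_\mu=Z_\nu=Z_0\). Both distributions have full support on \(\bbZ\), so absolute continuity holds trivially. For every \(x\), the log-ratio then simplifies to
\[
 \log\frac{\DG{\mu}{\sigma^2}(x)}{\DG{\nu}{\sigma^2}(x)}
 =\frac{(x-\nu)^2-(x-\mu)^2}{2\sigma^2}
 =\frac{(\mu-\nu)(2x-\mu-\nu)}{2\sigma^2}.
\]
The normalizers cancel exactly, which is the point of the equal-variance hypothesis.

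Next rewrite the affine factor around the first center: \(2x-\mu-\nu=2(x-\mu)+(\mu-\nu)\). The KL integrand is then \(\DG{\mu}{\sigma^2}(x)\) times
\[
 \frac{(\mu-\nu)(x-\mu)}{\sigma^2}+\frac{(\mu-\nu)^2}{2\sigma^2}.
\]
Summing over \(x\) splits the series into two parts. The constant term contributes \((\mu-\nu)^2/(2\sigma^2)\) times the total mass, which equals \((\nu-\mu)^2/(2\sigma^2)\) by normalization. The linear term contributes \((\mu-\nu)/\sigma^2\) times \(\sum_x (x-\mu)\DG{\mu}{\sigma^2}(x)\). Shifting by \(\mu\) turns this sum into the centered first moment of \(\DG{0}{\sigma^2}\). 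That moment is zero because of the symmetry \(x\mapsto -x\), and the development already proves this.

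The main obstacle is analytic rather than algebraic. Splitting \(\sum_x p(x)(a(x-\mu)+b)\) into \(a\sum_x p(x)(x-\mu)+b\sum_x p(x)\) requires both series to be summable, and the same fact is needed to discharge \(\mathsf{finiteKL}\). To get summability of \(|x-\mu|\,\DG{\mu}{\sigma^2}(x)\), I would first try a domination argument: bound \(|x|e^{-x^2/(2\sigma^2)}\) by a geometric tail such as \(C e^{-|x|}\) for \(|x|\) large, or reuse the summability lemma already proved for the moment fact. Linearity of the countable sum then applies, and the KL integrand is shown summable by the same bound. Finally, the identity of the lemma is the equality of these two evaluated series, because the definition of KL is taken over the summable integrand.
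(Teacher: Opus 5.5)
Your proposal is correct and follows essentially the same route as the paper. You expand the log-likelihood ratio, cancel the two normalizers by integer-shift invariance, evaluate the constant term by normalization, and kill the linear term with the vanishing centered first moment (from symmetry), using summability of the first-moment series to justify splitting the sum and finiteness of the KL integrand, which the paper likewise proves explicitly.
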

The proof is completely internal to the development.
It first exposes the normalized mass function, proves integer-shift invariance of the Gaussian weight, derives symmetry of the centered distribution, and then uses the zero-centered first moment to evaluate the remaining expectation.
The moment summability needed for that expectation is proved as well; it is no longer a side caveat.

The development also includes geometric upper bounds for events of the form
\[
  \Pr_{x\gets\DG{\mu}{\sigma^2}}
    \bigl[\, |x-\mu|\geq k\,\bigr].
\]
These tail bounds are useful for future cryptographic applications, but the current Pythagorean program logic primarily consumes the \Cref{lem:dg-kl} through the discrete-Gaussian sampling rule.

Thus the probability layer supplies three kinds of facts to the program logic:
ordinary distribution-level KL bounds such as the discrete-Gaussian theorem,
global KL-to-TVD conversion via Pinsker and the chain rule,
and completion/coupling lemmas that turn TVD bounds into additive-error
judgments over SSProve semantics.
The additive-error probability layer defines completion of a subdistribution
by adding an explicit failure point, constructs overlap and residual
distributions, and proves a maximal-coupling theorem from a TVD bound:
\begin{align*}
  \Delta(\mu,\nu)\leq \varepsilon
  \quad\Longrightarrow\quad&
  \exists d.\ \mathsf{Cpl}(d;\mu,\nu)\\
  &{}\wedge
  \Pr_{(x,y)\gets d}[x=y]\geq 1-\varepsilon .
\end{align*}
The program-level additive-error rules use this theorem after packing SSProve
outputs and heaps into a common completed output space.

\section{Formal Interface Excerpts}
\label{app:formal-interface-excerpts}

This appendix records selected Rocq excerpts that determine the meaning of the
main theorem.
They are not intended to replace the full artifact; rather, they expose the
small specification surface that a cryptographic reader should be able to
audit independently.

\begingroup
\setcoqmonofont

\subsection{Approximate Homomorphic Encryption Scheme Interface}
\label{app:approximate-homomorphic-encryption-interface}

\begin{minted}{coq}
Module Type ApproxFheScheme.
  Parameter pk_t : choice_type.
  Parameter evk_t : choice_type.
  Parameter sk_t : choice_type.
  Parameter message : choice_type.
  Parameter encryption : choice_type.
  (* Here we consider "tagged ciphertexts".
   * That is, an encryption together with an error bound.
   *
   * The `None` ciphertext should *only* come from evaluating unsupported operations.
   * e.g. out of circuit depth. *)
  Definition ciphertext := 'option (encryption × 'nat).
  (* We assume the homomorphic encryption operates over arithmetic circuits.
   * We therefore have a set of gates for building such circuits. *)
  Parameter unary_gate : choice_type.
  Parameter binary_gate : choice_type.
  Parameter interpret_unary : unary_gate → message → message.
  Parameter interpret_binary : binary_gate → message → message → message.
  (* Now, the "usual" 4-tuple (keygen, enc, eval, dec). *)
  Parameter keygen : distr R (pk_t × evk_t × sk_t).
  Axiom keygen_lossless : dweight keygen = 1.
  Parameter encrypt : pk_t → message → distr R ciphertext.
  Parameter eval1 : evk_t → unary_gate → ciphertext → distr R ciphertext.
  Parameter eval2 : evk_t → binary_gate → ciphertext → ciphertext →
    distr R ciphertext.
  Parameter decrypt : sk_t → ciphertext → distr R message.
\end{minted}

\subsection{Metric Charts}
\label{app:metric-chart-excerpt}

\begin{minted}{coq}
Module Type ApproxFheMetric(Import Scheme: ApproxFheScheme).
  Parameter metric : message → message → nat.
  Parameter dim : nat.
  (* We only care about metrics that are locally isometric to Z^n.
   * e.g., polynomials of some fixed degree whose coefficients belong to a finite field. *)
  (* Charts are origin-centered: each chosen center maps to the zero vector. *)
  Parameter isometry : message -> message -> dim.-tuple int.
  Parameter inverse_isometry : message -> dim.-tuple int -> message.
  Axiom isometry_center0 :
    forall (center : message), isometry center center = ivec_zero.
  Axiom metric_chartE :
    forall (center m : message),
    metric center m = ivec_dist ivec_zero (isometry center m).
  Axiom inverse_isometry_shift :
    forall (centerL centerR : message) (v : dim.-tuple int),
    inverse_isometry centerR v =
    inverse_isometry centerL (ivec_add v (isometry centerL centerR)).
\end{minted}

\subsection{Probability-One (Support-Level) Correctness}
\label{app:support-correctness-excerpt}

\begin{minted}{coq}
Module Type ApproxCorrectnessPerfect
  (Import Scheme: ApproxFheScheme) (Import M: ApproxFheMetric(Scheme)).
  (* For simplicity, we consider only pure and deterministic decryption. *)
  Parameter (deterministic_dec : sk_t → ciphertext → message).
  Axiom deterministic_dec_correct :
    ∀ sk c, \P_[ decrypt sk c ] (fun dec_out => ((dec_out == (deterministic_dec sk c)) : bool)) = 1.
  Definition is_underlying_plaintext sk (c : ciphertext) m :=
    match c with
    | None => false
    | Some (data, error_bound) => Order.le (metric (deterministic_dec sk c) m) error_bound
    end.
  Parameter (good_keys : pk_t → evk_t → sk_t → bool).
  Axiom keygen_perfect_correct :
    let bad_keys (keys : pk_t × evk_t × sk_t) :=
      let '(pk, evk, sk) := keys in ~~ (good_keys pk evk sk)
    in
    \P_[ keygen ] bad_keys = 0.
  Axiom encrypt_perfect_correct :
    ∀ pk evk sk m,
    good_keys pk evk sk →
    let bad_encryption c :=
        ~~ (is_underlying_plaintext sk c m)
    in
    \P_[ encrypt pk m ] bad_encryption = 0.
  Axiom eval1_perfect_correct :
    ∀ pk evk sk op c m,
    good_keys pk evk sk →
    is_underlying_plaintext sk c m →
    let bad_eval eval_out :=
      ~~ (is_underlying_plaintext sk eval_out (interpret_unary op m)) in
    \P_[ eval1 evk op c ] bad_eval = 0.
  Axiom eval2_perfect_correct :
    ∀ pk evk sk op c1 c2 m1 m2,
    good_keys pk evk sk →
    is_underlying_plaintext sk c1 m1 →
    is_underlying_plaintext sk c2 m2 →
    let bad_eval eval_out :=
      ~~ (is_underlying_plaintext sk eval_out (interpret_binary op m1 m2))
    in
    \P_[ eval2 evk op c1 c2 ] bad_eval = 0.
\end{minted}

\subsection{Discrete-Gaussian Distribution and KL Formula}
\label{app:discrete-gaussian-definition-excerpt}

The following excerpts pin down the sampler's mass function.  In particular,
the distribution is not an abstract parameter: it is the normalized
exponential weight below (and is the null subdistribution when the standard
deviation is nonpositive).

\begin{minted}{coq}
(* Unnormalized Gaussian function *)
Definition gaussian (s : R) (x : int) : R :=
  expR (- (x%:~R / s) ^ 2 / 2).
\end{minted}

\begin{minted}{coq}
(* Works "as expected" if s > 0.
 * null distribution otherwise. *)
Definition gaussian_pdf (s : R) (x : int) : R :=
  if s > 0 then
    gaussian s x / sum (gaussian s)
  else 0.
\end{minted}

\begin{minted}{coq}
Lemma isdistr_gaussian (s : R) :
  isdistr (gaussian_pdf s).
\end{minted}

\begin{minted}{coq}
Definition centered_discrete_gaussian s : distr R int :=
  mkdistr (isdistr_gaussian s).

Definition discrete_gaussian center s : distr R int :=
  dmargin (GRing.add center) (centered_discrete_gaussian s).
\end{minted}

The equal-variance KL identity consumed by the program logic is itself a
proved theorem, rather than an additional Gaussian assumption.

\begin{minted}{coq}
Theorem kl_discrete_gaussian (mu nu : int) (s : R) :
  s > 0 ->
  δ_KL (discrete_gaussian mu s) (discrete_gaussian nu s) =
    ((nu - mu)%:~R) ^+ 2 / (2 * s ^ 2).
\end{minted}

\subsection{Product Samplers and SSProve Bridge}
\label{app:discrete-gaussian-product-excerpt}

The construction folds the one-coordinate distribution into an independent
tuple.

\begin{minted}{coq}
Fixpoint dtuple {n : nat} {t: choiceType} :
    n.-tuple (distr R t) -> distr R (n.-tuple t) :=
  match n with
  | 0 => fun _ => dunit [tuple]
  | S i => fun ds =>
    \dlet_(x <- thead ds)
    \dlet_(xs <- (dtuple (behead_tuple ds)))
      dunit (cons_tuple x xs)
  end.

Definition nfold_distr (n : nat) {t: choiceType} (d: distr R t) :
  distr R (n.-tuple t) :=
  dtuple (nseq_tuple n d).
\end{minted}

\begin{minted}{coq}
Definition n_dg (n : nat) (s : R) : distr R (n.-tuple int) :=
  nfold_distr n (centered_discrete_gaussian s).
\end{minted}

The reduction uses SSProve's integer and vector types.  Its adapter transports
the same one-dimensional distribution into that integer type, then samples
the vector recursively.

\begin{minted}{coq}
Definition ssp_dg (m : 'int) (s : R) : distr R 'int :=
  dmargin (U := 'int) Z_of_int (discrete_gaussian (int_of_Z m) s).
\end{minted}

\begin{samepage}
\begin{minted}{coq}
Fixpoint discrete_gaussians_aux {n : nat} (s : R)
  : chVec 'int n -> distr R (chVec 'int n) :=
  match n with
  | 0 => fun _ => dunit (T := chVec 'int 0) tt
  | S i => fun ms =>
    let '(mhead, mtail) := ms in
    let dg := discrete_gaussians_aux s mtail in
    \dlet_(x <- ssp_dg mhead s)
    \dlet_(xs <- dg)
    dunit (x, xs)
  end.

Definition discrete_gaussians {n : nat} (center : chVec 'int n) (s : R) :=
  discrete_gaussians_aux s center.
\end{minted}
\end{samepage}

The proved bridge identifies that SSProve sampler, after conversion, with the
tuple sampler used by the transformed scheme.

\begin{minted}{coq}
  Lemma dmargin_toIntVec_discrete_gaussians_zero (n : nat) (s : R) :
    dmargin (@toIntVec n) (discrete_gaussians (zeroChVec n) s) =1
      n_dg n s.
\end{minted}

Finally, the addition in the displayed decryption code below is coordinatewise.

\begin{minted}{coq}
Definition ivec_add {n} (v w : n.-tuple int) : n.-tuple int :=
  [tuple (tnth v i + tnth w i) | i < n].
\end{minted}

\subsection{IND-CPA Game}
\label{app:ind-cpa-game-excerpt}

\begin{minted}{coq}
  Definition IndCpaOracle : IndCpaOracle_t :=
    [package IndCpa_locs ;
      #def #[oracle_encrypt] (messages : 'message_pair) : 'ciphertext
      {
        let (m0, m1) := messages in
        b ← get bit_addr ;;
        let m := if b then m1 else m0 in
        o ← get pk_addr ;;
        #assert isSome o as opk ;;
        let pk := getSome o opk in
        c <$ (ciphertext; encrypt pk m) ;;
        ret c
      }
    ].

  Definition IndCpaAdv_t := package IndCpaAdv_import IndCpaAdv_export.

  Definition IndCpaChallenger_t := package
    [interface
      #val #[adv_guess] : 'adv_keys → 'bool
    ]
    [interface
      #val #[main] : 'unit → 'bool
    ].

  Definition IndCpaChallenger : IndCpaChallenger_t :=
    [package IndCpa_locs ;
      #def #[main] (_ : 'unit) : 'bool
      {
        b <$ ('bool; dflip (1 / 2)) ;;
        keys <$ (pk_t × evk_t × sk_t; keygen) ;;
        let '(pk, evk, sk) := keys in
        #put bit_addr := b ;;
        #put pk_addr := Some pk ;;
        #put evk_addr := Some evk ;;
        b' ← call [ adv_guess ] : { pk_t × evk_t ~> 'bool} (pk, evk) ;;
        ret (eq_op b' b)
      }
    ].

  Definition IndCpaGame (Adv : nom_package) : nom_package :=
    ((IndCpaChallenger ∘ Adv)%sep ∘ IndCpaOracle)%share.
  
  Definition game_out (Adv : nom_package) : distr R bool :=
    dfst (Pr_op (IndCpaGame Adv) (main, ('unit, 'bool)) tt empty_heap).

  Definition winning_probability (A : nom_package) :=
    game_out A true.
End IndCpa.

Module Type IsIndCpa(Import Scheme: ApproxFheScheme).
  Module IndCpaGame := IndCpa Scheme.
  Import IndCpaGame.
  Parameter security_bound : nom_package -> R.
  Axiom is_secure : forall (A : nom_package),
    Package IndCpaAdv_import IndCpaAdv_export A ->
    winning_probability A <= security_bound A.
\end{minted}

\subsection{IND-CPA with Decryption Oracle and Game}
\label{app:ind-cpa-d-game-excerpt}

\begin{minted}{coq}
  Definition IndCpadOracle (max_queries : nat) : IndCpaOracle_t :=
    [package oracle_mem_spec ;
      #def #[oracle_encrypt] ('(m0, m1) : 'message × 'message ) : 'ciphertext
      {
        b ← get bit_addr ;;
        let m := if b then m1 else m0 in
        o ← get pk_addr ;;
        #assert isSome o as opk ;;
        let pk := getSome o opk in
        c <$ (ciphertext; encrypt pk m) ;;
        table ← get table_addr ;;
        let updated_table := (table ++ [ :: (m0,m1, c)]) in
        #put table_addr := updated_table ;;
        ret c
      } ; 
      #def #[oracle_eval1] ('(gate, r) : 'unary_gate × 'nat) : 'ciphertext
      {
        table ← get table_addr ;;
        #assert (r < length table) as r_in_range ;;
        let '(m0, m1, c) := nth_valid table r r_in_range in
        o ← get evk_addr ;;
        #assert isSome o as oevk ;;
        let evk := getSome o oevk in
        let m0' := interpret_unary gate m0 in
        let m1' := interpret_unary gate m1 in
        c' <$ (ciphertext; eval1 evk gate c) ;;
        let updated_table := (table ++ [ :: (m0', m1', c')]) in
        #put table_addr := updated_table ;;
        ret c'
      } ;
      #def #[oracle_eval2] ('((gate, ri), rj) : ('binary_gate × 'nat) × 'nat) : 'ciphertext
      {
        table ← get table_addr ;;
        #assert (ri < length table) as ri_in_range ;;
        #assert (rj < length table) as rj_in_range ;;
        let '(m0i, m1i, ci) := nth_valid table ri ri_in_range in
        let '(m0j, m1j, cj) := nth_valid table rj rj_in_range in
        let m0' := interpret_binary gate m0i m0j in
        let m1' := interpret_binary gate m1i m1j in
        o ← get evk_addr ;;
        #assert isSome o as oevk ;;
        let evk := getSome o oevk in
        c' <$ (ciphertext; eval2 evk gate ci cj) ;;
        let updated_table := (table ++ [ :: (m0', m1', c')]) in
        #put table_addr := updated_table ;;
        ret c'
      } ;
      #def #[oracle_decrypt] (i: 'nat) : 'option 'message
      {
        decrypt_count ← get decrypt_count_addr ;;
        #assert (decrypt_count < max_queries) ;;
        #put decrypt_count_addr := decrypt_count.+1 ;;
        table ← get table_addr ;;
        #assert (i < length table) as i_in_range ;;
        let '(m0, m1, c) := nth_valid table i i_in_range in
        if m0 == m1 then
          o ← get sk_addr ;;
          #assert isSome o as osk ;;
          let sk := getSome o osk in
          m <$ (message; decrypt sk c) ;;
          ret (Some m)
        else
          ret None
      }
    ].
\end{minted}

\begin{minted}{coq}
  Definition IndCpadAdv_t := package IndCpadAdv_import IndCpadAdv_export.

  Definition IndCpadChallenger_t := package
    [interface
      [guess] : { pk_t × evk_t ~> 'bool }
    ]
    [interface
      [main] : { 'unit ~> 'bool }
    ].

  Definition IndCpadChallenger : IndCpadChallenger_t :=
    [package oracle_mem_spec ;
      #def #[main] (_ : 'unit) : 'bool
      {
        b <$ ('bool; dflip (1 / 2)) ;;
        keys <$ (pk_t × evk_t × sk_t; keygen) ;;
        let '(pk, evk, sk) := keys in
        #put bit_addr := b ;;
        #put pk_addr := Some pk ;;
        #put evk_addr := Some evk ;;
        #put sk_addr := Some sk ;;
        b' ← call [ guess ] : { pk_t × evk_t ~> 'bool} (pk, evk) ;;
        ret (eq_op b' b)
      }
    ].

  Definition IndCpadGame
    (max_queries : nat) (Adv : nom_package) : nom_package :=
    ((IndCpadChallenger ∘ Adv)%sep ∘ IndCpadOracle max_queries)%share.

  Definition game_out
    (max_queries : nat) (Adv : nom_package) : distr R bool :=
    dfst (Pr_op (IndCpadGame max_queries Adv)
      (main, ('unit, 'bool)) tt empty_heap).

  Local Open Scope ring_scope.

  Definition winning_probability
    (max_queries : nat) (A : nom_package) :=
    game_out max_queries A true.

End IndCpad.

Local Open Scope ring_scope.

Module Type IsIndCpad(Import Scheme: ApproxFheScheme).
  Module IndCpadGame := IndCpad Scheme.
  Import IndCpadGame.
  (* Security bound may depend on the adversary and the decrypt-query budget. *)
  Parameter security_bound : nom_package -> nat -> R.
  Axiom is_secure : forall (A : nom_package) max_queries,
    Package IndCpadAdv_import IndCpadAdv_export A ->
    winning_probability max_queries A <= security_bound A max_queries.
\end{minted}

\subsection{Noise-Flooded Construction}
\label{app:noise-flooded-construction-excerpt}

\begin{minted}{coq}
Definition noise_flooding_dg_stdev
    (gaussian_width_multiplier : R) (error_bound : nat) : R :=
  (maxn 1 error_bound)%:~R * gaussian_width_multiplier.

Module Type NoiseFloodingParams.
Parameter gaussian_width_multiplier : R.
Axiom gt0_gaussian_width_multiplier :
  gaussian_width_multiplier > 0.
End NoiseFloodingParams.

Module NoiseFlooding
  (Import Scheme : ApproxFheScheme)
  (Import Metric : ApproxFheMetric(Scheme))
  (Import Params : NoiseFloodingParams)
  <: ApproxFheScheme.
Definition pk_t := Scheme.pk_t.
Definition evk_t := Scheme.evk_t.
Definition sk_t := Scheme.sk_t.
Definition Scheme_t := Scheme.Scheme_t.
Definition message := Scheme.message.
Definition encryption := Scheme.encryption.
Definition ciphertext := Scheme.ciphertext.
Definition unary_gate := Scheme.unary_gate.
Definition binary_gate := Scheme.binary_gate.
Definition interpret_unary := Scheme.interpret_unary.
Definition interpret_binary := Scheme.interpret_binary.
Definition keygen := Scheme.keygen.
Lemma keygen_lossless : dweight keygen = 1.
Proof. exact: Scheme.keygen_lossless. Qed.
Definition encrypt := Scheme.encrypt.
Definition eval1 := Scheme.eval1.
Definition eval2 := Scheme.eval2.
Definition dg_stdev (error_bound : nat) : R :=
  noise_flooding_dg_stdev gaussian_width_multiplier error_bound.
(* Invalid ciphertexts are modeled by the null subdistribution; the scheme
   interface treats them as unsupported evaluation results. *)
Definition decrypt (sk: sk_t) (c: ciphertext) : distr R message :=
  match c with
  | None => dnull
  | Some (_, e) =>
    \dlet_(m <- Scheme.decrypt sk c)
    \dlet_(noise <- n_dg dim (dg_stdev e))
    dunit (inverse_isometry m (ivec_add noise (isometry m m)))
  end.
\end{minted}

\subsection{Concrete IND-CPA Reduction}
\label{app:ind-cpa-reduction-excerpt}

\begin{minted}{coq}
  Definition IndCpadOracle (max_queries: nat) : IndCpaSim_t :=
    [package oracle_mem_spec ;
      #def #[oracle_encrypt] (messages : 'message_pair) : 'ciphertext
      {
        ready ← get ready_addr ;;
        #assert ready ;;
        c ← call [ oracle_encrypt ] : { message_pair ~> ciphertext } messages ;;
        table ← get table_addr ;;
        let '(m0, m1) := messages in
        let updated_table := (table ++ [ :: (m0, m1, c)]) in
        #put table_addr := updated_table ;;
        @ret ciphertext c
      } ; 
      #def #[oracle_eval1] (a : 'adv_ev1) : 'ciphertext
      {
        ready ← get ready_addr ;;
        #assert ready ;;
        let (gate, r) := a in
        table ← get table_addr ;;
        #assert (r < length table) as r_in_range ;;
        let '(m0, m1, c) := nth_valid table r r_in_range in
        o ← get evk_addr ;;
        #assert isSome o as oevk ;;
        let evk := getSome o oevk in
        let m0' := interpret_unary gate m0 in
        let m1' := interpret_unary gate m1 in
        c' <$ (ciphertext; eval1 evk gate c) ;;
        let updated_table := (table ++ [ :: (m0', m1', c')]) in
        #put table_addr := updated_table ;;
        @ret ciphertext c'
      } ;
      #def #[oracle_eval2] (a : 'adv_ev2) : 'ciphertext
      {
        ready ← get ready_addr ;;
        #assert ready ;;
        let '(gate, ri, rj) := a in
        table ← get table_addr ;;
        #assert (ri < length table) as ri_in_range ;;
        #assert (rj < length table) as rj_in_range ;;
        let '(m0i, m1i, ci) := nth_valid table ri ri_in_range in
        let '(m0j, m1j, cj) := nth_valid table rj rj_in_range in
        let m0' := interpret_binary gate m0i m0j in
        let m1' := interpret_binary gate m1i m1j in
        o ← get evk_addr ;;
        #assert isSome o as oevk ;;
        let evk := getSome o oevk in
        c' <$ (ciphertext; eval2 evk gate ci cj) ;;
        let updated_table := (table ++ [ :: (m0', m1', c')]) in
        #put table_addr := updated_table ;;
        @ret ciphertext c'
      } ;
      #def #[oracle_decrypt] (i: 'nat) : 'option_message
      {
        ready ← get ready_addr ;;
        #assert ready ;;
        decrypt_count ← get decrypt_count_addr ;;
        #assert (decrypt_count < max_queries) ;;
        #put decrypt_count_addr := decrypt_count.+1 ;;
        table ← get table_addr ;;
        #assert (i < length table) as i_in_range ;;
        let '(m0, m1, c) := nth_valid table i i_in_range in
        if (m0 == m1) then
          #assert isSome c as c_valid ;;
          let '(_, error_bound) := getSome c c_valid in
          noise <$ (chVec chInt dim;
            discrete_gaussians (zeroChVec dim)
              (noise_flooding_dg_stdev gaussian_width_multiplier error_bound)) ;;
          let res := inverse_isometry m0 (ivec_add (toIntVec noise) (isometry m0 m0)) in
          @ret ('option message) (Some res)
        else
          @ret ('option message) (None)
      }
    ].


  Definition IndCpaSimTop_t := package
    IndCpadAdv_export
    IndCpadAdv_export.

  Definition IndCpaSimTop : IndCpaSimTop_t :=
    [package oracle_mem_spec ;
      #def #[guess] ('(pk, evk) : 'adv_keys) : 'bool
      {
        ready ← get ready_addr ;;
        #assert (~~ ready) ;;
        #put ready_addr := true ;;
        #put pk_addr := Some pk ;;
        #put evk_addr := Some evk ;;
        b ← call [ guess ] : { pk_t × evk_t ~> 'bool } (pk, evk) ;;
        ret b
      }
    ].

  Definition IndCpaReduction (A : nom_package) (max_queries: nat) : nom_package :=
    ((IndCpaSimTop ∘ A)%sep ∘ IndCpadOracle max_queries)%share.
\end{minted}

\subsection{Loss Formula}
\label{app:loss-formula-excerpt}

\begin{minted}{coq}
Definition security_loss
    (dim max_queries : nat) (gaussian_width_multiplier : R) : R :=
  Num.sqrt
    ((max_queries%:R *
      (dim%:R / (2 * gaussian_width_multiplier ^+ 2))) / 2).
\end{minted}

\subsection{Reduction Validity}
\label{app:reduction-bound-excerpt}

\begin{minted}{coq}
  (* The package-level reduction preserves the IND-CPA adversary interface. *)
  Lemma ind_cpa_reduction_valid (A : nom_package) max_queries :
    Package IndCpaDSim.IndCpadAdv_import IndCpaDSim.IndCpadAdv_export A ->
    Package IndCpaSecurity.IndCpaGame.IndCpaAdv_import
      IndCpaSecurity.IndCpaGame.IndCpaAdv_export
      (ind_cpa_reduction A max_queries).
\end{minted}

\endgroup

\end{document}